\newcommand{\beq}{\begin{eqnarray*}}
\newcommand{\eeq}{\end{eqnarray*}}
\renewcommand{\theequation}{\thesection.\arabic{equation}}
\def\eqnarray{%
\stepcounter{equation}%
\let\@currentlabel=\theequation
\global\@eqnswtrue
\global\@eqcnt\z@
\tabskip\@centering
\let\\=\@eqncr
$$\halign to \displaywidth\bgroup\@eqnsel\hskip\@centering
$\displaystyle\tabskip\z@{##}$&\global\@eqcnt\@ne
\hfil$\displaystyle{{}##{}}$\hfil
&\global\@eqcnt\tw@$\displaystyle\tabskip\z@{##}$\hfil
\tabskip\@centering&\llap{##}\tabskip\z@\cr}
\newtheorem{theorem}{Theorem}[section]
\newtheorem{lemma}[theorem]{Lemma}
\newtheorem{proposition}[theorem]{Proposition}
\newtheorem{remark}{Remark}[section]
\newsavebox{\toy}
\savebox{\toy}{\framebox[0.65em]{\rule{0cm}{1ex}}}
\newcommand{\QED}{\usebox{\toy}}
\def\nlni{\par\ifvmode\removelastskip\fi\vskip\baselineskip\noindent}
\newenvironment{proof}{\nlni\begingroup\it Proof.\rm}{
\endgroup\vskip\baselineskip}
\begin{document}
%%%%%%% DOUBLE SPACED %%%%%%%%
\setlength{\baselineskip}{15pt}
\title{
Fluctuation 
of density of states for 1d Schr\"odinger operators
}
\author{
Fumihiko Nakano
\thanks{
Department of Mathematics,
Gakushuin University,
1-5-1, Mejiro, Toshima-ku, Tokyo, 171-8588, Japan.
e-mail : 
fumihiko@math.gakushuin.ac.jp}
}
%\date{\UTF{008D}\UTF{00C5}\UTF{008F}I\UTF{008D}X\UTF{0090}V\UTF{0093}\UTF{00FA}\UTF{0081}F}
\maketitle
%\UTF{0091}\UTF{00E6}\UTF{0088}\UTF{00EA}\UTF{0083}y\UTF{0081}[\UTF{0083}W\UTF{0082}\UTF{00CC}\UTF{0094}\UTF{00D4}\UTF{008D}\UTF{0086}\UTF{0082}\UTF{00F0}\UTF{008F}\UTF{00C1}\UTF{0082}\UTF{00B7}%
%\thispagestyle{empty}
%%%%%%% ABSTRACT %%%%%%%%%%%%%
\begin{abstract}
We consider 
the 1d Schr\"odinger operator with random decaying potential and compute the 2nd term asymptotics of the density of states, which shows substantial differences between the cases 
$\alpha > \frac 12$, 
$\alpha < \frac 12$
and 
$\alpha = \frac 12$. 
\end{abstract}

%Mathematics Subject Classification (2000): 82B44, 81Q10

%\tableofcontents
%%%%% INTRODUCTION %%%%%%%%%%%%%%%%%%%%%%%%%%%%%%%%%
\section{Introduction}
1d Schr\"odinger operators 
with random decaying potentials have rich spectral structures depending the decay rate of the potential, so that there are many studies on this topic 
(e.g, \cite{KLS} and references therein).
In this paper 
we consider the following operator : 
\beq
H :=
- \frac {d^2}{d t^2} + a(t) F(X_t) 
\;
\mbox{ on }
L^2({\bf R})
\eeq
where 
$a \in C^{\infty}({\bf R})$, 
$a(-t) = a(t)$, $t>0$, 
$a$ 
is non-decreasing for 
$t > 0$, 
and
$a(t) = t^{- \alpha} (1 + o(1))$, $t \to \infty$, 
$\alpha > 0$.
$F \in C^{\infty}(M)$ 
on a torus $M$
such that 
\[
\langle F \rangle
:=
\int_M F(x) dx = 0, 
\]
and 
$( X_t )_{t \in {\bf R}}$
is a Brownian motion on 
$M$ 
with generator 
$L$. 
The spectrum of 
$H$ 
on 
$[0, \infty)$ 
is a.c. for 
$\alpha > \frac 12$, 
pure point for 
$\alpha < \frac 12$, 
and for 
$\alpha = \frac 12$, 
pure point on 
$[0, E_c]$ 
and 
s.c. on 
$[E_c, \infty)$ 
for some 
$E_c \ge 0$
\cite{KU}. 
For the 
level statistics problem, 
the point process 
$\xi_L$,  
whose atoms are composed of the rescaled eigenvalues of the finite volume restriction of 
$H$, 
converges to clock process 
($\alpha > \frac 12$), 
Sine$_{\beta}$-process 
($\alpha = \frac 12$), 
and 
Poisson process
($\alpha < \frac 12$)
\cite{KN1, N2, KN2}. 
Let 
$H_n := H |_{[0,n]}$
be the restriction of 
$H$ 
on 
$L^2 [0,n]$ 
with Dirichlet boundary condition.
Pick  
$0 < \kappa_1 < \kappa_2$ 
arbitrary and set 
\beq
N_{n}(\kappa_1, \kappa_2)
:=
\sharp \{
\mbox{ eigenvalues of 
$H_{n}$ 
in }
(\kappa_1^2, \kappa_2^2)
\}.
\eeq
Since 
the integrated density of states 
$N$ 
of 
$H$ 
is equal to
$N(E) (:= 
\lim_{n \to \infty}
\frac 1n
\sharp \{
\mbox{ eigenvalues of } H_n \le E
\})
=
\pi^{-1} \sqrt{E}$  
as far as 
$\alpha > 0$, 
we have 
\begin{equation}
N_{n} (\kappa_1, \kappa_2)
=
\frac {n}{\pi} 
\left(
\kappa_2 - \kappa_1
\right)
(1 + o(1)), 
\quad
n \to \infty.
\label{N}
\end{equation}
The 
purpose of this paper is to study the 2nd term asymptotics of this equation.
This problem 
is often studied in the context of random matrix theory (e.g., \cite{K}). 
In what follows, 
we state the result which is divided into the following three cases : 
$\alpha > \frac 12$, 
$\alpha = \frac 12$, 
and 
$\alpha < \frac 12$.
\\
\noindent
(1)
{\bf Super-critical decay 
($\alpha > \frac 12$) : }
We need 
to consider suitable subsequences as we did in the study of level statistics : \\
\noindent
{\bf Assumption A }\\
A subsequence 
$\{ n_k \}_{k=1}^{\infty}$
satisfies 
$\lim_{k \to \infty} n_k = \infty$
and as 
$k \to \infty$, 
\beq
\{ \kappa_j n_k \}_{\pi}
=
\gamma_j + o(1), 
\eeq
for some 
$\gamma_j \in [0, \pi)$, 
$j=1,2$.
Here we set 
$\{ x \}_{\pi} 
:=
x- 
\lfloor x \rfloor_{\pi} \cdot \pi$, 
$\lfloor x \rfloor_{\pi}
:=
\left\lfloor x/ \pi \right\rfloor$.
\\
We further 
need to introduce a new quantity. 
Let 
$\theta_t(\kappa)$ 
be the Pr\"ufer angle defined in Section 2.
Set 
$\theta_t(\kappa)= \kappa t + \tilde{\theta}_t(\kappa)$.
Then 
for a.s., the 
$t \to \infty$ 
limit of 
$\tilde{\theta}_t(\kappa)$ 
exists for any 
$\kappa$ 
\cite{KU}.
Let 
$\tilde{\theta}_{\infty}(\kappa)
:=
\lim_{t \to \infty} 
\tilde{\theta}_t (\kappa)$. 
%

%
%%%%% Super Critical %%%%%%%%%%%%%%%%%%%%%%%%%%%%%%%%%%
%
\begin{theorem}
\label{supercritical}
{\bf ($\alpha > \frac 12$)}
Suppose 
Assumption A.
Then 
for a.s., 
\beq
N_{n_k}(\kappa_1, \kappa_2)
-
\left(
\left\lfloor
n_k \kappa_2
\right\rfloor_{\pi}
-
\left\lfloor
n_k \kappa_1
\right\rfloor_{\pi}
\right)
=
\left\lfloor
\gamma_2 + \tilde{\theta}_{\infty}(\kappa_2)
\right\rfloor_{\pi}
-
\left\lfloor
\gamma_1 + \tilde{\theta}_{\infty}(\kappa_1)
\right\rfloor_{\pi}
\eeq
for sufficiently large 
$k$.
\end{theorem}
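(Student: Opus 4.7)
My plan is to reduce the counting function to a floor of the Pr\"ufer angle, substitute the decomposition $\theta_t(\kappa)=\kappa t+\tilde\theta_t(\kappa)$, and then use the two inputs stated just before the theorem: Assumption~A controls the deterministic phase $\{n_k\kappa_j\}_\pi$, while \cite{KU} supplies the a.s.\ limit $\tilde\theta_\infty(\kappa_j)$ for the random phase.

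First, I would invoke the eigenvalue/Pr\"ufer correspondence set up in Section~2. With the Dirichlet condition $\psi(0)=0$ one takes $\theta_0(\kappa)=0$, and then $\kappa^2$ is an eigenvalue of $H_n$ exactly when $\theta_n(\kappa)\in\pi\mathbb{Z}$. Combined with the strict monotonicity of $\kappa\mapsto\theta_n(\kappa)$, this yields, for $\kappa_1^2,\kappa_2^2$ outside the (at most countable) spectrum of $H_n$,
\[
N_n(\kappa_1,\kappa_2) \;=\; \lfloor \theta_n(\kappa_2)\rfloor_\pi \;-\; \lfloor \theta_n(\kappa_1)\rfloor_\pi .
\]
Writing $n_k\kappa_j=\lfloor n_k\kappa_j\rfloor_\pi\cdot\pi+\{n_k\kappa_j\}_\pi$ and combining with Assumption~A together with $\tilde\theta_{n_k}(\kappa_j)\to\tilde\theta_\infty(\kappa_j)$, I get
\[
\theta_{n_k}(\kappa_j) \;=\; \lfloor n_k\kappa_j\rfloor_\pi\cdot\pi \;+\; \bigl(\gamma_j+\tilde\theta_\infty(\kappa_j)\bigr) \;+\; o(1) .
\]
Applying $\lfloor\cdot\rfloor_\pi$ and pulling the integer $\lfloor n_k\kappa_j\rfloor_\pi$ out additively,
\[
\lfloor \theta_{n_k}(\kappa_j)\rfloor_\pi \;=\; \lfloor n_k\kappa_j\rfloor_\pi \;+\; \bigl\lfloor \gamma_j+\tilde\theta_\infty(\kappa_j)+o(1)\bigr\rfloor_\pi ,
\]
and since $\lfloor\cdot\rfloor_\pi$ is locally constant off $\pi\mathbb{Z}$, the second term equals $\lfloor\gamma_j+\tilde\theta_\infty(\kappa_j)\rfloor_\pi$ for all large $k$, provided $(\gamma_j+\tilde\theta_\infty(\kappa_j))/\pi\notin\mathbb{Z}$. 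Differencing $j=2$ and $j=1$ and inserting the result into the formula for $N_{n_k}$ from Step~1 gives the theorem.

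The main obstacle is justifying this nondegeneracy almost surely. Since $\gamma_j$ and $\kappa_j$ are deterministic, it reduces to showing that for each fixed $\kappa>0$ the law of $\tilde\theta_\infty(\kappa)\bmod\pi$ is nonatomic. I would extract this from the stochastic integral representation of $\tilde\theta_t(\kappa)$ underlying the analysis in \cite{KU,KN1}: in the super-critical regime $\alpha>\tfrac12$ the fluctuation $\tilde\theta_\infty(\kappa)$ is the convergent limit of an integral of $a(t)F(X_t)$ against oscillatory factors, and the nondegeneracy of the resulting Gaussian-type limit forces an absolutely continuous one-dimensional marginal, which charges no countable set. Once this a.s.\ nondegeneracy is in hand, both the convergence $\tilde\theta_{n_k}(\kappa_j)\to\tilde\theta_\infty(\kappa_j)$ and the avoidance of $\pi\mathbb{Z}$ hold on a common full measure set, which is the set on which the theorem is asserted.
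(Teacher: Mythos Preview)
Your overall approach matches the paper's: Sturm oscillation gives $N_{n_k}=\lfloor\theta_{n_k}(\kappa_2)\rfloor_\pi-\lfloor\theta_{n_k}(\kappa_1)\rfloor_\pi$, Assumption~A together with the a.s.\ convergence $\tilde\theta_{n_k}(\kappa_j)\to\tilde\theta_\infty(\kappa_j)$ yields the asymptotic decomposition of $\theta_{n_k}(\kappa_j)$, and continuity of the law of $\tilde\theta_\infty(\kappa)$ makes the floor eventually constant. The paper's Section~3 proof is exactly this.

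The one soft spot is your justification of the nondegeneracy. In the super-critical regime $\tilde\theta_\infty(\kappa)$ is an \emph{almost sure} limit of a nonlinear functional of the Brownian path, not a distributional Gaussian limit; since $\int_0^\infty a(s)^2\,ds<\infty$ there is no rescaling and no CLT in play, so invoking a ``Gaussian-type limit'' to force absolute continuity is not a valid argument. The paper instead cites the argument of Proposition~7.1 in \cite{KN1}, showing
\[
\lim_{m\to\infty}\lim_{t\to\infty}\mathbf{E}\bigl[e^{im\tilde\theta_t(\kappa)}\bigr]=0,
\]
so the Fourier coefficients of the law of $\tilde\theta_\infty(\kappa)$ tend to zero, which forces continuity of that law (Wiener's lemma) and hence $\{\gamma_j+\tilde\theta_\infty(\kappa_j)\}_\pi\ne0$ a.s. You should replace the Gaussian heuristic with this Fourier-coefficient argument.
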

%
%%%%%%%%%%%%%%%
(2)
{\bf Critical decay  ($\alpha = \frac 12$) : }
%
%%%%%%%%%%%%%%%%%%%%
\begin{theorem}
\label{critical}
{\bf ($\alpha=\frac 12$) }\\
Let 
$\{ G (\kappa) \}_{\kappa > 0}$, 
$G$
be mutually independent Gaussian field and a Gaussian such that  
\beq
&&
Cov
\left(
G(\kappa), G(\kappa')
\right)
=
\frac 12
\delta_{\kappa, \kappa'}
\langle [ g_{\kappa}, \overline{g}_{\kappa} ] \rangle, 
\quad
\kappa, \kappa' >0,
\\
&&
Cov
\left( G, G \right)
=
\langle [ g, g ] \rangle, 
\\
&&
g_{\kappa}
=
(L + 2i \kappa)^{-1} F, 
\quad
g :=
L^{-1}( F - \langle F \rangle), 
\\
&&
[ f, g ]
:=
\nabla f \cdot \nabla g.
\eeq
Then as 
$n \to \infty$ 
\beq
&&
\Biggl\{
N_{n}(\kappa_1, \kappa_2)
-
\frac {n}{\pi}
(\kappa_2 - \kappa_1)
-
Re
\left(
\frac {C_1(\kappa_2)}{2\pi \kappa_2}
-
\frac {C_1(\kappa_1)}{2\pi \kappa_1}
\right)
\int_0^{n} a(s)^{2} ds
\Biggr\}
\frac{1}{\sqrt{\log n}}
\\
&&
\stackrel{d}{\to}
\frac {1}{2 \pi \kappa_2} G(\kappa_2)
-
\frac {1}{2  \pi \kappa_1} G(\kappa_1)
-
\left(
\frac {1}{2 \pi  \kappa_2}
-
\frac {1}{2  \pi \kappa_1}
\right)
G
\eeq
in the sense of weak convergence as the processes on 
$(\kappa_1, \kappa_2) 
\in (0, \infty)^2$ 
where 
$C_1(\kappa)
:=
- \frac {i}{2 \kappa}
\langle F g_{\kappa} \rangle$
is a deterministic constant.
\end{theorem}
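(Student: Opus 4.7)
The plan is to reduce everything to the asymptotic analysis of the Pr\"ufer angle $\tilde{\theta}_n(\kappa)$ via the standard identity $N_n(\kappa_1,\kappa_2) = \pi^{-1}[\theta_n(\kappa_2) - \theta_n(\kappa_1)] + O(1)$, which, after using $\theta_n(\kappa) = \kappa n + \tilde{\theta}_n(\kappa)$, reduces the claim to the joint fluctuation of $\pi^{-1}\tilde{\theta}_n(\kappa)$ around $\mathrm{Re}(C_1(\kappa)/(2\pi\kappa)) \int_0^n a(s)^2 ds$. The Pr\"ufer ODE gives
\[
d\tilde{\theta}_t(\kappa) = \frac{a(t)F(X_t)}{2\kappa} dt - \frac{a(t)F(X_t)}{2\kappa} \cos 2\theta_t(\kappa)\, dt,
\]
and I split $\tilde{\theta}_n$ into a zero-frequency part $I_0(n) = (2\kappa)^{-1}\int_0^n a(s) F(X_s)\, ds$ and an oscillating part $I_\kappa(n) = -(2\kappa)^{-1}\int_0^n a(s) F(X_s) \cos 2\theta_s(\kappa)\, ds$, which I treat separately.

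For $I_0$, the hypothesis $\langle F \rangle = 0$ lets me solve the Poisson equation $L g = F$ on $M$ and apply It\^o's formula to $a(t) g(X_t)$: the boundary contribution $a(n)g(X_n)$ is $O(n^{-1/2})$ and $\int_0^n a'(s) g(X_s) ds$ is $O(1)$ since $a'(s) = O(s^{-3/2})$ is integrable. What remains is the martingale $-\int_0^n a(s) \nabla g(X_s) \cdot dB_s$, whose quadratic variation is $\int_0^n a(s)^2 |\nabla g(X_s)|^2 ds \sim \log n \cdot \langle [g,g] \rangle$ by ergodicity of $X_t$ on $M$, so the martingale CLT yields $I_0(n)/\sqrt{\log n} \stackrel{d}{\to} G/(2\kappa)$ with $\mathrm{Var}(G) = \langle [g,g] \rangle$. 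For $I_\kappa$, I write $\cos 2\theta = \mathrm{Re}\, e^{2i\theta}$ and apply It\^o to $a(t) g_\kappa(X_t) e^{2i\theta_t(\kappa)}$; using $(L + 2i\kappa) g_\kappa = F$, the main identity is
\[
\int_0^n a F e^{2i\theta_s} ds = [a g_\kappa e^{2i\theta}]_0^n - \int_0^n a' g_\kappa e^{2i\theta} ds + \frac{2i}{\kappa}\int_0^n a^2 F g_\kappa \sin^2\theta_s\, e^{2i\theta_s} ds - \int_0^n a e^{2i\theta_s} \nabla g_\kappa \cdot dB_s.
\]

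The decomposition $\sin^2\theta\, e^{2i\theta} = -1/4 + e^{2i\theta}/2 - e^{4i\theta}/4$ isolates the $-1/4$ piece as the only contribution with no $\theta$-oscillation; after ergodic averaging of $a(s)^2 F(X_s) g_\kappa(X_s)$ this produces a deterministic drift proportional to $\langle F g_\kappa\rangle \int_0^n a^2 ds$, which upon taking real parts matches the centering $\mathrm{Re}(C_1(\kappa)/(2\pi\kappa)) \int_0^n a^2 ds$ in the theorem (up to a sign convention built into the Pr\"ufer equation). The two remaining oscillatory pieces $e^{2im\theta}$ for $m=1,2$ are handled by a further It\^o step using the resolvents $(L + 2im\kappa)^{-1}$, which shows their contributions to be $o(\log n)$. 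The leftover It\^o martingale has complex quadratic variation $\int_0^n a^2 \nabla g_\kappa \cdot \nabla \bar g_\kappa\, ds \sim \log n \cdot \langle [g_\kappa, \bar g_\kappa]\rangle$, yielding $G(\kappa)/(2\pi\kappa)$ in the limit after division by $\sqrt{\log n}$.

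Joint weak convergence as a process on $(\kappa_1,\kappa_2) \in (0,\infty)^2$ then follows from the multidimensional martingale CLT, once the cross-variations of the martingales attached to distinct $\kappa, \kappa'$, and between each of them and the $G$-martingale, are shown to vanish. These cross-variations carry oscillatory phase factors $e^{2i(\theta_s(\kappa) \pm \theta_s(\kappa'))}$ or $e^{\pm 2i\theta_s(\kappa)}$, which when integrated against $a(s)^2 \sim 1/s$ contribute only $o(\log n)$, yielding the mutual independence of $G(\kappa), G(\kappa'), G$ stated in the theorem; tightness in $\kappa$ follows from a Kolmogorov-type second-moment estimate on the $\kappa$-derivatives of the resolvent expressions. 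The principal technical obstacle is exactly this oscillatory control: because $\tilde{\theta}_s(\kappa) \sim \sqrt{\log s}$ is unbounded, one cannot simply substitute $e^{2i\theta_s(\kappa)}$ by $e^{2i\kappa s}$ inside integrals, so each oscillating integral must be processed intrinsically by another It\^o step with $(L + 2im\kappa)^{-1}$, spawning boundary terms, smaller-order drifts, and new martingales, all of which must be tracked carefully to verify they remain $o(\log n)$ and do not perturb the limiting covariance structure.
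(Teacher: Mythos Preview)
Your proposal is correct and follows essentially the same route as the paper: the paper likewise reduces to $\tilde\theta_n(\kappa_2)-\tilde\theta_n(\kappa_1)$ via Sturm, splits into the oscillating integral $J^{(n)}_t(\kappa)=\int_0^{n} a\,e^{2i\theta_s}F(X_s)\,ds$ and the zero-frequency integral $J^{(n)}_{0,t}=\int_0^{n} aF(X_s)\,ds$, and treats each by It\^o's formula with the resolvents $(L+i\beta\kappa)^{-1}$ and $L^{-1}$, iterating once more on the surviving $e^{2i\theta},e^{4i\theta}$ pieces before invoking the martingale CLT with the cross-variation cancellations you describe. The only cosmetic difference is that the paper organises the iteration through a pair of general lemmas (your ``further It\^o step'') and a proposition giving the expansion of $J^{(n)}_t(\kappa)$ to arbitrary order $D$, then specialises to $D=1$; also note your Pr\"ufer ODE carries the opposite overall sign to the paper's convention, which you already flag.
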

\begin{remark}
\cite{K} 
studied this problem for CMV matrices, but they do not need to subtract the constant term due to the rotational invariance. 
\end{remark}
%
%
%%%%%
(3)
{\bf Subcritical-decay 
{\bf ($\alpha < \frac 12$)}}
\begin{theorem}
\label{subcritical}
{\bf ($\alpha<\frac 12$) }\\
Set 
$D :=
\min \{
d \in {\bf N} \, | \, 
\frac {1}{2 \alpha} < d+ 1 \}$.
Let 
$\{ G_t (\kappa) \}_{t \in [0,1],\, \kappa > 0}$, 
$\{ G_{t} \}_{t \in [0,1]}$ 
be the mutually independent Gaussian fields such that 
\beq
Cov
\left(
G_t(\kappa), G_s(\kappa')
\right)
&=&
\frac 12
\delta_{\kappa, \kappa'}
\frac {
\langle [ g_{\kappa}, \overline{g}_{\kappa} ] \rangle
}
{1 - 2 \alpha}
( t \wedge s )^{1 - 2 \alpha}
\\
Cov
\left(
G_{t}, G_{s}
\right)
&=&
\frac {
\langle [ g, g ] \rangle
}
{1 - 2 \alpha}
( t \wedge s )^{1 - 2 \alpha}.
\eeq
Then as 
$n \to \infty$  
\beq
&&
\Biggl\{
N_{nt}(\kappa_1, \kappa_2)
-
\frac {nt}{\pi}
(\kappa_2 - \kappa_1)
-
\sum_{j=1}^D
Re
\left(
\frac {C_j(\kappa_2)}{2\pi \kappa_2}
-
\frac {C_j(\kappa_1)}{2\pi \kappa_1}
\right)
\int_0^{nt} a(s)^{j+1} ds
\Biggr\}
\frac{1}{n^{\frac 12 - \alpha}}
\\
&&
\stackrel{d}{\to}
\frac {1}{2 \pi \kappa_2} G_t(\kappa_2)
-
\frac {1}{2 \pi\kappa_1} G_t(\kappa_1)
-
\left(
\frac {1}{2 \pi\kappa_2}
-
\frac {1}{2 \pi \kappa_1}
\right)
G_{t}
\eeq
in the sense of weak convergence 
as the processes for 
$(\kappa_1, \kappa_2, t) \in 
(0, \infty)^2 \times [0,\infty)$, 
where 
$C_j(\kappa)$, 
$j=1, 2, \cdots, D$
are deterministic constants given in 
(\ref{C}). 
\end{theorem}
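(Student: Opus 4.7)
The plan is to express $N_n(\kappa_1,\kappa_2)$ through the Pr\"ufer phase, convert the eigenvalue count into phase-counting, and then extract a deterministic skeleton by iterated integration-by-parts based on shifted Poisson equations on the torus, leaving a single asymptotically dominant martingale whose scaling limit yields the claimed Gaussian process. Writing $\theta_t(\kappa)=\kappa t+\tilde\theta_t(\kappa)$, the Pr\"ufer oscillation count gives $N_n(\kappa_1,\kappa_2)=\frac{n(\kappa_2-\kappa_1)}{\pi}+\frac{1}{\pi}(\tilde\theta_n(\kappa_2)-\tilde\theta_n(\kappa_1))+O(1)$, with the $O(1)$ absorbed after dividing by $n^{1/2-\alpha}$. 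The Pr\"ufer ODE $\dot\theta=\kappa-\frac{a(t)}{\kappa}F(X_t)\sin^2\theta$ splits $\tilde\theta_n(\kappa)$ into a $\kappa$-independent drift $-\frac{1}{2\kappa}\int_0^n a(s)F(X_s)\,ds$ and an oscillatory term $\frac{1}{2\kappa}\mathrm{Re}\int_0^n a(s)F(X_s)e^{2i\theta_s(\kappa)}\,ds$.

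For the drift, set $g:=L^{-1}F$ (defined since $\langle F\rangle=0$). Applying It\^{o}'s formula to $a(s)g(X_s)$ and integrating by parts in $s$, I rewrite $\int_0^n aF(X_s)\,ds$ as $-\int_0^n a(s)\nabla g(X_s)\cdot dB_s+O(1)$, with bracket asymptotic to $\frac{\langle[g,g]\rangle}{1-2\alpha}n^{1-2\alpha}$ by ergodicity of $X_t$ on the torus. The functional martingale CLT then gives, after scaling by $n^{1/2-\alpha}$, the Gaussian limit $-G_t$, accounting for the common-drift piece in the statement.

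For the oscillatory part, I iterate as follows. At step $j\ge 1$ let $g_{j,\kappa}:=(L+2ij\kappa)^{-1}F_j$, with $F_1:=F$ and $F_j$ generated recursively from the residual integrand of step $j-1$. Applying It\^{o}'s formula to $a(s)^j g_{j,\kappa}(X_s)e^{2ij\theta_s(\kappa)}$, using $(L+2ij\kappa)g_{j,\kappa}=F_j$ together with $\sin^2\theta=\frac{1}{2}-\frac{1}{4}(e^{2i\theta}+e^{-2i\theta})$, cancels the level-$j$ oscillation at the cost of boundary and $a'$-terms (both of lower order), a deterministic constant $C_j(\kappa)$ times $\int_0^n a(s)^{j+1}\,ds$ (matching $-\frac{i}{2\kappa}\langle Fg_{1,\kappa}\rangle$ at $j=1$), new oscillatory integrands at levels $j-1,j,j+1$ carrying one extra factor of $a$, and a martingale. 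The iteration terminates at $j=D$, the first index for which $\int_0^n a^{D+2}\,ds=o(n^{1/2-\alpha})$ by the defining inequality $D+1>1/(2\alpha)$; the step-$j$ martingales for $j\ge 2$ carry an extra $a$ and have brackets of order $o(n^{1-2\alpha})$, so they are negligible. The surviving step-$1$ martingale $M_n(\kappa):=\int_0^n a(s)e^{2i\theta_s(\kappa)}\nabla g_{1,\kappa}(X_s)\cdot dB_s$ has bracket $\int_0^n a^2[g_{1,\kappa},\overline{g_{1,\kappa}}](X_s)\,ds\sim\frac{\langle[g_\kappa,\overline{g_\kappa}]\rangle}{1-2\alpha}n^{1-2\alpha}$ by ergodicity; for $\kappa\neq\kappa'$ the cross-bracket picks up a non-vanishing phase $2(\kappa-\kappa')s$ and integrates to $o(n^{1-2\alpha})$, yielding the $\delta_{\kappa,\kappa'}$ in the covariance, while decoupling from $G_t$ follows from analogous oscillatory averaging. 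The functional martingale CLT together with Kolmogorov tightness (via smoothness of $\kappa\mapsto g_{j,\kappa}$ as the resolvent of $L+2ij\kappa$) then delivers the claimed joint weak convergence on $(\kappa_1,\kappa_2,t)$.

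The main obstacle is the iterative bookkeeping: every term produced over the $D$ rounds other than the explicit $C_j$ corrections and the terminating step-$1$ martingale has to be shown to be $o(n^{1/2-\alpha})$. This requires tracking the growth $n^{1-(j+1)\alpha}$ of $\int_0^n a^{j+1}\,ds$ against the target scale $n^{1/2-\alpha}$, controlling cascades of oscillatory error integrals via the regularity of $g_{j,\kappa}$ in $\kappa$, and invoking ergodic mixing of $X_t$ uniformly enough to obtain the joint statement in $(\kappa_1,\kappa_2,t)$.
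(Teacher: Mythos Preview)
Your proposal is correct and follows essentially the same route as the paper: both reduce $N_{nt}(\kappa_1,\kappa_2)$ to the Pr\"ufer phase difference $\tilde\theta_{nt}(\kappa_2)-\tilde\theta_{nt}(\kappa_1)$, split off the $\kappa$-independent drift $\int_0^{nt} a(s)F(X_s)\,ds$ (handled via $g=L^{-1}F$ and It\^o), and expand the oscillatory integral $\int_0^{nt} a(s)e^{2i\theta_s(\kappa)}F(X_s)\,ds$ by iterated integration-by-parts against the shifted resolvents $(L+i\beta\kappa)^{-1}$, stopping after $D$ rounds so that the residual $\int_0^{nt}a^{D+2}\,ds$ is $o(n^{1/2-\alpha})$; the only surviving martingale of leading order is the step-one term with bracket $\sim \langle[g_\kappa,\overline{g_\kappa}]\rangle\,n^{1-2\alpha}/(1-2\alpha)$, and the martingale CLT finishes. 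The paper organizes the iteration slightly differently---it indexes each elementary integral by the pair $(m,\beta)$ (power of $a$ and oscillation frequency) rather than by a single ``level $j$'', which makes the branching into frequencies $\beta\pm2,\beta$ and the combinatorics of the constants $C_k(\kappa)$ more explicit---but the underlying mechanism and all the order-of-magnitude estimates are identical to yours.
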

\begin{remark}
For fixed 
$\kappa_1$, $\kappa_2$, 
RHS is equal to the superposition of Brownian motions in distribution. 
\beq
&&
\frac {1}{2 \pi  \kappa_2} G_t(\kappa_2)
-
\frac {1}{2 \pi  \kappa_1} G_t(\kappa_1)
-
\left(
\frac {1}{2 \pi  \kappa_2}
-
\frac {1}{2 \pi  \kappa_1}
\right)
G_{0 ,t}
\\
&& \qquad
\stackrel{d}{=}
\frac {1}{2 \pi  \kappa_2} 
\sqrt{
\frac 12
\frac {
\langle [ g_{\kappa}, \overline{g}_{\kappa} ] \rangle
}
{1 - 2 \alpha}
}
B_{t^{1 -  2\alpha}}^{(2)}
-
\frac {1}{2 \pi  \kappa_1} 
\sqrt{
\frac 12
\frac {
\langle [ g_{\kappa}, \overline{g}_{\kappa} ] \rangle
}
{1 - 2 \alpha}
}
B_{t^{1 -  2\alpha}}^{(1)}
\\
&& \qquad\qquad
-
\left(
\frac {1}{2 \pi  \kappa_2}
-
\frac {1}{2 \pi \kappa_1}
\right)
\sqrt{
\frac {
\langle [ g, g ] \rangle
}
{1 - 2 \alpha}
}
B_{t^{1 -  2\alpha}}^{(0)}.
\eeq
\end{remark}
\begin{remark}
If 
$a(s)$ 
satisfies 
$a(s) = s^{- \alpha}$, $s \ge R$ 
for some 
$R > 0$, 
we have the following asymptotic expansion. 
\beq
N_{nt}(\kappa_1, \kappa_2)
&\sim&
\frac {nt}{\pi}
(\kappa_2 - \kappa_1)
+
C_2 
(nt)^{1 - 2\alpha}
+
C_3 
(nt)^{1 - 3\alpha}
\\
&& \quad
+
\cdots 
+
C_D 
(nt)^{1 - (D+1)\alpha}
+
n^{\frac 12 - \alpha}
(\mbox{Gaussian}).
\eeq
\end{remark}
\begin{remark}
Theorems \ref{supercritical}, 
\ref{critical}, \ref{subcritical}
roughly imply that the 2nd term in 
(\ref{N}) 
is 
(1) bounded for 
$\alpha > \frac 12$, 
(2) $O(\log n)$ 
for 
$\alpha = \frac 12$, 
and 
(3) $O(n^{1 - 2 \alpha})$ 
for 
$\alpha < \frac 12$. 
That 
the 2nd term grows bigger as 
$\alpha$ 
becomes smaller reflects the fact that the IDS becomes totally different for 
$\alpha = 0$.
\end{remark}
\begin{remark}
We can study 
the case where 
$a(s)$ 
decays slower than 
$s^{-\alpha}$ 
for any 
$\alpha > 0$. 
For instance, if 
$a(s) = (\log s + 1)^{-\delta}$ 
($\delta > 0$), 
then 
\beq
N_n(\kappa_1, \kappa_2)
&=&
\frac {n}{\pi} (\kappa_2 - \kappa_1)
+
N_1(\kappa_1, \kappa_2) + N_2(\kappa_1, \kappa_2), 
\quad
n \to \infty
\eeq
where 
$N_1$
has the following asymptotic expansion
\beq
N_1(\kappa_1, \kappa_2) - 
\sum_{j=2}^k
C_k n (\log n)^{-k \delta}
=
O \left(
n (\log n)^{-(k+1) \delta}
\right)
\eeq
with 
$C_j$ 
being deterministic constants.
$N_2$ 
is a martingale converging to a Gaussian field : 
\[
\frac {N_2(\kappa_1, \kappa_2)}{n^{1/2}(\log n)^{- \delta}} 
\stackrel{d}{\to} 
G(\kappa_1, \kappa_2).
\]
\end{remark}
\begin{remark}
A natural 
and reasonable extension of the problem discussed in this paper is to consider 
\beq
N_n (f) 
&=&
\sum_k
f
\left(
\sqrt{E_k (n)}
\right)
\eeq
where 
$f$ 
is a sufficiently smooth function compactly supported on 
$(0, \infty)$, 
and 
$\{ E_k (n) \}_k$ 
are the positive eigenvalues of 
$H_n$ 
arranged in the increasing order. 
We can show that 
\beq
N_n (f)
&=&
\sum_j 
\int_0^{\infty} f'(\kappa)  g_j(\kappa) d \kappa
\int_0^n a (s)^j ds
+
B_n(f)
+
M_n(f)
\eeq
where 
$g_j(\kappa)$ 
are bounded functions, 
$B_n(f)$ 
is a bounded process, and 
$M_n(f)$
is a martingale.
However, 
we are unable to derive the growth order of 
$\langle M_n (f) \rangle$ 
as 
$n \to \infty$ 
the study of which is postponed to the future work.
\end{remark}
{\bf Decaying coupling constant model (DC model) : }
Let 
us consider the following Hamiltonian : 
\beq
H'_n 
:= - \frac {d^2}{d t^2} + 
\lambda_n F(X_t)
\;
\mbox{ on }
L^2 [0,n], 
\quad
\lambda_n := n^{-\alpha}, 
\quad
\alpha > 0
\eeq
with Dirichlet boundary condition.
Because 
in 1d, the localization length of 
$H = - \triangle + \lambda V$ 
is typically 
$O(\lambda^{- \frac 12})$,  
the property of 
$H'_n$ 
would also change at 
$\alpha = \frac 12$. 
In fact, 
as for the level statistics problem, 
$\xi_L$ 
converges to the (deterministic) clock process for 
$\alpha > \frac 12$, 
Sch$_{\tau}$-process for 
$\alpha = \frac 12$, 
and 
Poisson process for 
$\alpha < \frac 12$
\cite{N2, KN2}.
We can 
apply the discussion in this paper also to  
$H'_n$ 
and obtain the 2nd term asymptotics of 
$N_n(\kappa_1, \kappa_2)$  
under the same notation as in Theorems  
\ref{supercritical}, 
\ref{critical}, 
\ref{subcritical}.
Here 
the major difference from 
the decaying potential model 
$H$  
is that the 2nd term is bounded also for the critical case. 
\begin{theorem}
\label{DCsupercritical}
{\bf ($\alpha > \frac 12$)}
Suppose Assumption A with 
$\gamma_j \ne 0$, 
$j=1,2$. 
Then 
for a.s., 
\beq
N_{n_k}(\kappa_1, \kappa_2)
-
\left(
\left\lfloor
n_k \kappa_2
\right\rfloor_{\pi}
-
\left\lfloor
n_k \kappa_1
\right\rfloor_{\pi}
\right)
=
0
\eeq
for sufficiently large 
$k$.
\end{theorem}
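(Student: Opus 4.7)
\textbf{Proof plan for Theorem \ref{DCsupercritical}.}

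The idea is to reduce the eigenvalue count to the Prüfer angle of $H'_n$ and then show that the fluctuation part of this Prüfer angle vanishes in the $n \to \infty$ limit. Let $\theta^{(n)}_t(\kappa)$ denote the Prüfer angle associated with the DC potential $\lambda_n F(X_t)$ (defined in analogy with Section~2 but with $a(t)$ replaced by the constant $\lambda_n = n^{-\alpha}$), and write $\theta^{(n)}_t(\kappa) = \kappa t + \tilde\theta^{(n)}_t(\kappa)$. Sturm oscillation theory then gives
\[
N_{n}(\kappa_1,\kappa_2) = \lfloor \theta^{(n)}_n(\kappa_2) \rfloor_\pi - \lfloor \theta^{(n)}_n(\kappa_1) \rfloor_\pi
\]
almost surely, so the claim amounts to showing that $\lfloor n_k \kappa_j + \tilde\theta^{(n_k)}_{n_k}(\kappa_j) \rfloor_\pi = \lfloor n_k \kappa_j \rfloor_\pi$ for $j = 1, 2$ and sufficiently large $k$.

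The main technical step is to establish $\tilde\theta^{(n)}_n(\kappa) \to 0$ almost surely as $n \to \infty$. The SDE driving $\tilde\theta^{(n)}_t$ is the same one analyzed in \cite{KN1, KU} for the decaying potential model, with the prefactor $a(t)$ replaced by the constant $\lambda_n$. Performing the Birkhoff-type change of variables that removes the explicit oscillatory terms $e^{\pm 2i \kappa t}$ via $g_\kappa = (L + 2i\kappa)^{-1} F$ decomposes
\[
\tilde\theta^{(n)}_n(\kappa) = \lambda_n M_n(\kappa) + \lambda_n^2 R_n(\kappa) + o(1),
\]
where $M_n(\kappa)$ is a continuous martingale with quadratic variation of order $n$ and $R_n(\kappa)$ is a smooth drift of order $n$. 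Thus $|\tilde\theta^{(n)}_n(\kappa)| = O(n^{\frac 12 - \alpha} \sqrt{\log n}) + O(n^{1 - 2\alpha})$, both of which vanish when $\alpha > \frac 12$; a Doob maximal inequality plus Borel-Cantelli upgrades the in-probability bound to almost sure convergence.

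Given this, the nonresonance hypothesis $\gamma_j \ne 0$ (so $\gamma_j \in (0, \pi)$) keeps $\{ n_k \kappa_j \}_\pi$ in a compact subinterval of $(0, \pi)$ for large $k$. Combined with $\tilde\theta^{(n_k)}_{n_k}(\kappa_j) \to 0$, this places $n_k \kappa_j + \tilde\theta^{(n_k)}_{n_k}(\kappa_j)$ in the same Prüfer interval $[m \pi, (m+1) \pi)$ as $n_k \kappa_j$, yielding equality of their floors. The main obstacle is the vanishing estimate above: the Prüfer SDE depends on $n$ through $\lambda_n$, so one cannot appeal to a single limiting $\tilde\theta_\infty$ as in Theorem~\ref{supercritical}, and instead a quantitative bound on the $n$-dependent family is required. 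The assumption $\gamma_j \ne 0$ is equally indispensable, since an arbitrarily small fluctuation can shift the floor by $\pm 1$ precisely when $\{ n_k \kappa_j \}_\pi \to 0$.
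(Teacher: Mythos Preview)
The paper does not give an explicit proof of Theorem~\ref{DCsupercritical}; it only remarks that the discussion for $H$ adapts to $H'_n$. Your proposal is precisely that adaptation and is correct: the decomposition of Section~2 with $a(s)\equiv\lambda_n$ shows $\tilde\theta^{(n)}_n(\kappa)\to 0$, and the hypothesis $\gamma_j\in(0,\pi)$ then replaces the continuity-of-distribution argument used in Section~3 to pin down the floor.

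One refinement on the almost-sure step: Doob's inequality at second moments gives only $P(|M_n|>\epsilon)=O(n^{1-2\alpha})$, which is summable in $n$ only when $\alpha>1$. To cover all $\alpha>\tfrac12$ you should use BDG at order $2p$ with $p>1/(2\alpha-1)$ (the quadratic variation is deterministically $\le Cn^{1-2\alpha}$ since $\nabla g_\kappa$ is bounded on $M$, so this is immediate), or the exponential martingale inequality; Borel--Cantelli then yields $\tilde\theta^{(n_k)}_{n_k}(\kappa_j)\to 0$ a.s.\ as claimed.
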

\begin{theorem}
\label{DCcritical}
{\bf ($\alpha = \frac 12$)}
Suppose Assumption A.
We then have 
\beq
N_{n_k}(\kappa_1, \kappa_2)
-
\left(
\left\lfloor
n_k \kappa_2
\right\rfloor_{\pi}
-
\left\lfloor
n_k \kappa_1
\right\rfloor_{\pi}
\right)
\stackrel{d}{\to}
\left\lfloor
\gamma_2 + \tilde{\theta}_{\infty}(\kappa_2)
\right\rfloor_{\pi}
-
\left\lfloor
\gamma_1 + \tilde{\theta}_{\infty}(\kappa_1)
\right\rfloor_{\pi}
\eeq
as 
$k \to \infty$. 
\end{theorem}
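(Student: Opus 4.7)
The strategy is to parallel the argument for Theorem \ref{supercritical}, replacing the almost-sure convergence of the Prüfer-angle fluctuation by convergence in distribution at the endpoint $t = n_k$. By the Prüfer representation from Section 2, the eigenvalue count is
$$N_n(\kappa_1,\kappa_2) = \lfloor \theta_n(\kappa_2)\rfloor_\pi - \lfloor \theta_n(\kappa_1)\rfloor_\pi,$$
where $\theta_n(\kappa) = \kappa n + \tilde\theta_n(\kappa)$ is the Prüfer angle at $t = n$ associated with $H'_n$ at spectral parameter $\kappa^2$. Writing $n_k\kappa_j = \lfloor n_k\kappa_j\rfloor_\pi \cdot \pi + \{n_k\kappa_j\}_\pi$ and using additivity of $\lfloor\cdot\rfloor_\pi$ modulo $\pi$, the left-hand side of the theorem becomes
$$\lfloor \{n_k\kappa_2\}_\pi + \tilde\theta_{n_k}(\kappa_2)\rfloor_\pi - \lfloor \{n_k\kappa_1\}_\pi + \tilde\theta_{n_k}(\kappa_1)\rfloor_\pi.$$
Under Assumption A, $\{n_k\kappa_j\}_\pi \to \gamma_j$, so the task reduces to showing that $\tilde\theta_{n_k}(\kappa_j)$ converges jointly in distribution to $\tilde\theta_\infty(\kappa_j)$.

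For this distributional limit I would recycle the stochastic-calculus decomposition of $\tilde\theta$ developed in the proof of Theorem \ref{critical}, but now with the uniform coupling $\lambda_{n_k} = n_k^{-1/2}$ in place of the position-dependent weight $a(t)$. This gives a splitting
$$\tilde\theta_{n_k}(\kappa) = D_{n_k}(\kappa) + R_{n_k}(\kappa) + M_{n_k}(\kappa),$$
with $D_{n_k}(\kappa)$ a non-oscillatory drift of size $\lambda_{n_k}^2 n_k = 1$, $R_{n_k}(\kappa)$ an oscillatory boundary/remainder term of size $O(\lambda_{n_k}) = o(1)$, and $M_{n_k}(\kappa)$ a martingale whose bracket is of order $\lambda_{n_k}^2 n_k = 1$. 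By ergodicity of $X_t$ on $M$, both $D_{n_k}(\kappa)$ and the bracket of $M_{n_k}(\kappa)$ converge to deterministic limits expressible via $g_\kappa = (L + 2i\kappa)^{-1} F$, while the resolvent structure forces the cross-bracket between $M_{n_k}(\kappa_1)$ and $M_{n_k}(\kappa_2)$ to vanish for $\kappa_1 \ne \kappa_2$. A martingale CLT then yields the joint Gaussian limit and identifies $\tilde\theta_\infty(\kappa)$ as the sum of this limiting drift and an independent Gaussian field in $\kappa$.

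The final step is a continuous-mapping argument. Since $\tilde\theta_\infty(\kappa_j)$ has a continuous distribution, the translated values $\gamma_j + \tilde\theta_\infty(\kappa_j)$ almost surely avoid $\pi{\bf Z}$, so $x \mapsto \lfloor x\rfloor_\pi$ is continuous at the limit points. Joint weak convergence of $(\{n_k\kappa_j\}_\pi, \tilde\theta_{n_k}(\kappa_j))_{j=1,2}$ together with the continuous-mapping theorem then delivers the claimed distributional identity.

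The main obstacle, I expect, is implementing the $(L + 2i\kappa)^{-1}$ inversion from Theorem \ref{critical} uniformly along the subsequence so as to cleanly isolate the oscillatory remainder $R_{n_k}$ and verify its $o(1)$ decay --- this is where the loss of the favourable $\sqrt{\log n}$ normalization of Theorem \ref{critical} must be compensated by the smaller coupling $\lambda_{n_k}$. A secondary subtlety is that, unlike in Theorem \ref{DCsupercritical}, the hypothesis $\gamma_j \ne 0$ is not imposed here; absolute continuity of the Gaussian limit of $\tilde\theta_\infty(\kappa_j)$ automatically handles the boundary case of the $\lfloor\cdot\rfloor_\pi$ discontinuity that necessitated the extra assumption in the super-critical setting.
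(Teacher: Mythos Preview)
Your proposal is correct and aligns with the paper's approach. The paper does not give a separate detailed proof of Theorem \ref{DCcritical}; it simply remarks (at the start of Section 3) that the argument for Theorem \ref{supercritical} carries over to the DC model with $\alpha \ge \tfrac12$, with continuity of the law of $\tilde{\theta}_{\infty}(\kappa)$ supplied by the proof of Proposition 7.1 in \cite{KN1}. Your write-up is a faithful and more explicit expansion of that sketch: you reduce to the joint distributional limit of $\tilde{\theta}_{n_k}(\kappa_j)$ via Sturm oscillation, obtain the limit by the same Ito/resolvent decomposition (Lemmas \ref{CorKm}--\ref{CorK0} with $a(s)$ replaced by the constant $\lambda_{n_k}=n_k^{-1/2}$) followed by the martingale CLT, and then close with the continuous-mapping argument using absolute continuity of the limiting law to handle the discontinuities of $\lfloor\cdot\rfloor_{\pi}$. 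Your observation that the Gaussian limit removes the need for the hypothesis $\gamma_j\ne 0$ (needed in Theorem \ref{DCsupercritical}) is exactly the point.
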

\begin{theorem}
\label{DCsubcritical}
{\bf ($\alpha<\frac 12$) }
Set 
$D :=
\min \{
d \in {\bf N} \, | \, 
\frac {1}{2 \alpha} < d+ 1 \}$.
Let 
$\{ G_t (\kappa) \}_{t \in [0,1], \kappa > 0}$, 
$\{ G_{t} \}_{t \in [0,1]}$
be the mutually independent Gaussians such that 
\beq
Cov
\left(
G_t(\kappa), G_s(\kappa')
\right)
&=&
\frac 12
\delta_{\kappa, \kappa'}
\langle [ g_{\kappa}, \overline{g}_{\kappa} ] \rangle
( t \wedge s )^{1 - 2 \alpha}
\\
Cov
\left(
G_{t}, G_{s}
\right)
&=&
\langle [ g, g ] \rangle
( t \wedge s )^{1 - 2 \alpha}. 
\eeq
We then have 
\beq
&&
\Biggl\{
N_{nt}(\kappa_1, \kappa_2)
-
\frac {nt}{\pi}
(\kappa_2 - \kappa_1)
-
\sum_{j=1}^D
Re
\left(
\frac {C_j(\kappa_2)}{2\pi \kappa_2}
-
\frac {C_j(\kappa_1)}{2\pi \kappa_1}
\right)
(nt)^{1-(j+1)\alpha}
%
%\int_0^{nt} a(s)^{j+1} ds
\Biggr\}
\frac{1}{n^{\frac 12 - \alpha}}
\\
&&
\stackrel{d}{\to}
\frac {1}{2 \pi \kappa_2} G_t(\kappa_2)
-
\frac {1}{2 \pi\kappa_1} G_t(\kappa_1)
-
\left(
\frac {1}{2 \pi\kappa_2}
-
\frac {1}{2 \pi \kappa_1}
\right)
G_{t}.
\eeq
\end{theorem}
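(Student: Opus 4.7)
\noindent\textbf{Proof proposal for Theorem \ref{DCsubcritical}.}
The plan is to follow the Pr\"ufer-variable strategy used in the proof of Theorem \ref{subcritical}, adapted to the constant-coupling DC model $H'_{nt}$ with $\lambda_{nt}:=(nt)^{-\alpha}$. I would start from the standard identity $N_{nt}(\kappa_1,\kappa_2) = \lfloor \theta_{nt}(\kappa_2)/\pi\rfloor - \lfloor \theta_{nt}(\kappa_1)/\pi\rfloor$ relating the Dirichlet eigenvalue count to the Pr\"ufer angle $\theta_s(\kappa)$, and set $\tilde\theta_s(\kappa):=\theta_s(\kappa)-\kappa s$. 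Writing $\sin^2\theta=(1-\cos 2\theta)/2$, the SDE $d\tilde\theta_s = -(\lambda_{nt}/\kappa)F(X_s)\sin^2(\kappa s+\tilde\theta_s)\,ds$ splits into a zero-mode piece driven by $F(X_s)$ and an oscillatory piece driven by $F(X_s)e^{\pm 2i(\kappa s+\tilde\theta_s)}$.

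I would then iterate It\^o's formula, using the resolvents $g=L^{-1}(F-\langle F\rangle)$ for the zero mode and $g_{\kappa}=(L+2i\kappa)^{-1}F$ for the oscillatory mode, to rewrite each drift integral modulo (i) a deterministic contribution of size $\lambda_{nt}^{j+1}\cdot nt = (nt)^{1-(j+1)\alpha}$, whose coefficient collapses to $C_j(\kappa)/(2\pi\kappa)$ after ergodic averaging against $X_s$; (ii) a new drift one order higher in $\lambda_{nt}$, fed into the next iteration; and (iii) a martingale increment plus bounded boundary terms. After $D$ iterations, one additional partial integration bounds the unresolved drift by $(nt)^{1-(D+2)\alpha}$, which by the defining condition $D+1>1/(2\alpha)$ is $o(n^{1/2-\alpha})$ and can be discarded. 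The accumulated deterministic terms reproduce exactly the $C_j$ corrections in the theorem statement; the only structural change from Theorem \ref{subcritical} is that $\int_0^{nt}a(s)^{j+1}ds$ is replaced by $\lambda_{nt}^{j+1}\cdot nt = (nt)^{1-(j+1)\alpha}$, without the factor $1/(1-(j+1)\alpha)$ produced there by integrating $s^{-(j+1)\alpha}$.

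The remaining martingale $M_{nt}(\kappa)$ has predictable quadratic variation asymptotic, by ergodicity of $X_s$ on $M$, to $\lambda_{nt}^{2}\cdot nt \cdot \tfrac{1}{2}\langle[g_{\kappa},\overline{g}_{\kappa}]\rangle = (nt)^{1-2\alpha}\cdot\tfrac{1}{2}\langle[g_{\kappa},\overline{g}_{\kappa}]\rangle$, and analogously $(nt)^{1-2\alpha}\langle[g,g]\rangle$ for the zero-mode martingale; the constant coupling removes the factor $1/(1-2\alpha)$ that appeared in Theorem \ref{subcritical}. Rescaling by $n^{1/2-\alpha}$ and invoking the martingale CLT, together with the fact that cross-variations between distinct modes $\kappa\neq\kappa'$ oscillate to zero, delivers the Gaussian field with covariance $(t\wedge s)^{1-2\alpha}$ stated in the theorem; tightness in $t$ follows from standard Kolmogorov-type estimates on the quadratic variation as a function of $t$. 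The main obstacle, as in the $H$ case, is the combinatorial bookkeeping in the iterated It\^o expansion together with uniform control of the remainders needed for process-level convergence in $(\kappa_1,\kappa_2,t)$; the $O(1)$ error produced by the floor functions in the Pr\"ufer eigenvalue count is absorbed into the normalization $n^{1/2-\alpha}\to\infty$.
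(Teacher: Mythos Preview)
Your proposal is correct and follows essentially the same route as the paper. The paper does not write out a separate proof for Theorem~\ref{DCsubcritical}; it simply states that the machinery of Section~2 (Lemmas~\ref{CorKm}, \ref{CorK0} and Propositions~\ref{D}, \ref{D0}) and the argument of Section~5 apply verbatim to $H'_n$, and you have correctly identified the two substitutions this entails: replacing $\int_0^{nt} a(s)^{j+1}\,ds$ by $\lambda_{nt}^{j+1}\cdot nt=(nt)^{1-(j+1)\alpha}$ in the deterministic corrections, and $\int_0^{nt} a(s)^{2}\,ds$ by $(nt)^{1-2\alpha}$ in the quadratic variations, thereby dropping the factor $1/(1-2\alpha)$ from the limiting covariance.
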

The 
main ingredient of the proof is to express 
$N_{nt}(\kappa_1, \kappa_2)$ 
in terms of the Pr\"ufer angles
$\theta_t(\kappa)$, 
as is done in 
\cite{K}. 
Then 
we study the behavior of 
$\theta_t(\kappa)$ 
by the martingale analysis developed in \cite{KU}. 
The plan 
of this paper is as follows.
In Section 2, 
we introduce the Pr\"ufer variable and compute the basic integrals which frequently appears in this paper. 
In Sections 3,4,5, 
we prove Theorems 
\ref{supercritical},
\ref{critical}
and 
\ref{subcritical} 
respectively. 
%

%%%%%%%%%%%%%%%%%%%%%%%%%%%%%%%%%%%%%%%%%%%%%%%%%%%%%%%%%
\section{Preliminaries}
\subsection{Pr\"ufer coordinate}
For 
$\kappa > 0$ 
let 
$x_t (\kappa)$
be the solution to the Schr\"odinger equation 
$H x_t = \kappa^2 x_t$, 
$x_0 (\kappa) = 0$ 
which we represent in terms of the Pr\"ufer coordinate : 
\beq
\left(
\begin{array}{c}
x_t(\kappa) \\ x'_t(\kappa) / \kappa
\end{array}
\right)
=
\left(
\begin{array}{c}
r_t \sin \theta_t(\kappa)
\\
r_t \cos \theta_t(\kappa)
\end{array}
\right), 
\quad
\theta_0 (\kappa) = 0.
\eeq
Let 
$\theta_t(\kappa) 
=
\kappa t + \tilde{\theta}_t(\kappa)$.
By 
Sturm's oscillation theorem, 
\begin{equation}
N_{nt}(\kappa_1, \kappa_2) 
- 
\frac {1}{\pi}nt (\kappa_2 - \kappa_1)
=
\frac {1}{\pi}
\left(
\tilde{\theta}_{nt}(\kappa_2) 
- 
\tilde{\theta}_{nt}(\kappa_1)
\right)
\pm 1
\label{plusminus}
\end{equation}
so that it suffices to study the behavior of 
$\tilde{\theta}_t(\kappa)$.
Noting that 
$\tilde{\theta}_t(\kappa)$ 
satisfies the following integral equation, 
\beq
\tilde{\theta}_{t}(\kappa)
&=&
\frac {1}{2 \kappa}
Re
\int_0^t
\left(
e^{2 i \theta_s(\kappa)}- 1
\right)
a(s) F(X_s) ds, 
\eeq
we set 
\beq
J_t^{(n)}(\kappa)
&:=&
\int_0^{nt}
a(s)e^{2i \theta_{s}(\kappa)} 
F(X_s) ds
\\
J_{0,t}^{(n)}
&:=&
\int_0^{nt} a(s) F(X_s)ds.
\eeq
We 
can then decompose 
\begin{eqnarray}
&&
\tilde{\theta}_{nt}(\kappa_2)
-
\tilde{\theta}_{nt}(\kappa_1)
=
A + B + C
\label{decomposition}
\\
&& \qquad
A 
=
\frac {1}{2 \kappa_2} Re \;
J_t^{(n)}(\kappa_2), 
\quad
B =
-\frac {1}{2 \kappa_1} Re \;
J_t^{(n)}(\kappa_1),
\nonumber
\\
&& \qquad
C =
-\left(
\frac {1}{2\kappa_2} - \frac {1}{2\kappa_1}
\right)
Re \; J_{0,t}^{(n)}.
\nonumber
\end{eqnarray}
We shall 
study the behavior of 
$J_t^{(n)}$, 
$J_{0, t}^{(n)}$, 
as 
$n \to \infty$. 
%
%%%%%%%%%%%%%
\subsection{Basic Calculus on Integrals}
For 
$H \in C^{\infty}(M)$, 
set 
\[
K^{(n)}_{m, \beta, \kappa, t} (H)
:=
\int_0^{nt}
a(s)^m 
e^{i \beta \theta_{s}(\kappa)}
H(X_s) ds, 
\quad
m \in {\bf N}, 
\;
\beta \in {\bf R}.
\]
\begin{lemma}
\label{CorKm}
If 
$\beta \ne 0$,
\begin{eqnarray}
K^{(n)}_{m, \beta, \kappa, t} (H)
&=&
K^{(n)}_{m+1, \beta+2, \kappa, t}
\left(
T_{\beta, \kappa}^+(H)
\right)
+
K^{(n)}_{m+1, \beta-2, \kappa, t}
\left(
T_{\beta, \kappa}^- (H)
\right)
\nonumber
\\
%
%&& \quad
&&\quad
+
K^{(n)}_{m+1, \beta, \kappa, t}
\left(
T_{\beta, \kappa}^0 (H)
\right)
+
L_{m, \beta, \kappa, t}^{(n)}
+
M_{m, \beta, \kappa, t}^{(n)}.
\label{decompositionK}
\end{eqnarray}
where 
$T_{\beta}^{\sharp}$, 
$\sharp = \pm, 0$
are the operators acting on 
$C^{\infty}(M)$ 
defined by 
\beq
(T_{\beta, \kappa}^+ H)(x)
&=&
-\frac {i \beta}{2 \kappa} 
\cdot
\frac 12
\cdot
F(x) 
\cdot
\left(
R_{\beta \kappa}H
\right) (x)
\\
(T_{\beta, \kappa}^- H)(x)
&=&
-\frac {i \beta}{2 \kappa} 
\cdot
\frac 12
\cdot
F(x) 
\cdot
\left(
R_{\beta \kappa}H
\right) 
(x)
\\
(T_{\beta, \kappa}^0 H)(x)
&=&
\frac {i \beta}{2 \kappa} 
\cdot
F(x) 
\cdot
\left(
R_{\beta \kappa}H
\right)
(x)
\\
R_{\kappa}H 
&:=&
(L + i \kappa)^{-1}H.
\eeq
$L_{m, \beta, \kappa, t}^{(n)}$
is bounded and 
$M_{m, \beta, \kappa, t}^{(n)}$
is a martingale such that 
\beq
&&
L_{m, \beta, \kappa, t}^{(n)}
=
\left[
a(s)^m e^{i \beta \theta_{s}(\kappa)}
R_{\beta \kappa} (H)(X_s) ds
\right]_0^{nt}
\\
&& \qquad\qquad\qquad
- 
\int_0^{nt}
(a(s)^m)' 
e^{i \beta \theta_{s}(\kappa)}
R_{\beta \kappa}(H)(X_s) ds
\\
&&
M_{m, \beta, \kappa, t}^{(n)}
=
- 
\int_0^{nt}
a(s)^m e^{i \beta \theta_s(\kappa)}
(\nabla R_{\beta \kappa}H)(X_s) d X_s
\\
&&
\langle M_{m, \beta, \kappa, t}^{(n)}, 
M_{m, \beta, \kappa, t}^{(n)}
\rangle, 
\;
\langle M_{m, \beta, \kappa, t}^{(n)}, 
\overline{
M_{m, \beta, \kappa, t}^{(n)}
}
\rangle
=
O\left(
\int_0^{nt} a(s)^{2m} ds
\right), 
\quad
n \to \infty.
\eeq
\end{lemma}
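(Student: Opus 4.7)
The plan is to derive (\ref{decompositionK}) by applying It\^o's formula to the auxiliary process $\Phi_s := a(s)^m e^{i\beta\theta_s(\kappa)}(R_{\beta\kappa}H)(X_s)$. The choice $u := R_{\beta\kappa}H$ is dictated by the desire for $u$ to satisfy the Poisson-type equation $Lu = H - i\beta\kappa u$, which will permit the factor $H(X_s)$ to be extracted from the generator drift of $\Phi_s$. This is precisely where the hypothesis $\beta\neq 0$ enters, since it ensures invertibility of $L + i\beta\kappa$ on $C^{\infty}(M)$ (the spectrum of $L$ lies in $(-\infty,0]$, so only $\beta\kappa=0$ can be obstructed).

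First I would record the ODE for the Pr\"ufer angle. A direct calculation from the Pr\"ufer parametrization of $-x''+a(s)F(X_s)x = \kappa^2 x$ yields $d\theta_s(\kappa) = \kappa\,ds - \kappa^{-1}a(s)F(X_s)\sin^2\theta_s(\kappa)\,ds$, so
\[
d\bigl(e^{i\beta\theta_s(\kappa)}\bigr) = e^{i\beta\theta_s(\kappa)}\Bigl(i\beta\kappa - \frac{i\beta}{\kappa} a(s)F(X_s)\sin^2\theta_s(\kappa)\Bigr)\,ds.
\]
Applying It\^o's formula to $\Phi_s$ and invoking $Lu = H - i\beta\kappa u$, the two drift contributions proportional to $i\beta\kappa$ cancel exactly. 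Integrating from $0$ to $nt$ and solving for the $H$-integral produces
\[
K^{(n)}_{m,\beta,\kappa,t}(H) = L^{(n)}_{m,\beta,\kappa,t} + \frac{i\beta}{\kappa}\int_0^{nt} a(s)^{m+1} F(X_s)\sin^2\theta_s(\kappa) e^{i\beta\theta_s(\kappa)} u(X_s)\,ds + M^{(n)}_{m,\beta,\kappa,t},
\]
with $L^{(n)}_{m,\beta,\kappa,t}$ the boundary/Riemann pair displayed in the statement and $M^{(n)}_{m,\beta,\kappa,t}$ the It\^o integral there.

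To split the middle term into the three claimed $K$-integrals I expand $\sin^2\theta = \frac12 - \frac14(e^{2i\theta}+e^{-2i\theta})$, producing phases $e^{i\beta\theta}$, $e^{i(\beta+2)\theta}$, $e^{i(\beta-2)\theta}$ on the $a(s)^{m+1}F(X_s)u(X_s)$-factor. Matching the resulting coefficients against the definitions of $T^{\pm}_{\beta,\kappa}H = -\frac{i\beta}{4\kappa}F\cdot R_{\beta\kappa}H$ and $T^{0}_{\beta,\kappa}H = \frac{i\beta}{2\kappa}F\cdot R_{\beta\kappa}H$ delivers (\ref{decompositionK}) term by term.

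For the bounds, $\|u\|_\infty$ and $\|\nabla u\|_\infty$ are finite because $R_{\beta\kappa}$ preserves $C^{\infty}(M)$, so the boundary piece of $L^{(n)}_{m,\beta,\kappa,t}$ is $O(1)$; the Riemann piece is controlled because $\int_0^\infty |(a^m)'(s)|\,ds<\infty$ thanks to the monotonicity of $a$ at infinity. For the martingale, the It\^o isometry on $M$ yields $d\langle M,\overline{M}\rangle_s = a(s)^{2m}[u,\overline{u}](X_s)\,ds$ and an analogous identity for $\langle M,M\rangle_s$, both uniformly controlled by $\|[u,u]\|_\infty\int_0^{nt}a(s)^{2m}\,ds$. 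There is no serious obstacle beyond careful bookkeeping; the only genuinely creative ingredient is the choice of auxiliary function through the resolvent $R_{\beta\kappa}$, which is already signposted by the operators appearing in the statement.
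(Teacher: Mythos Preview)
Your argument is correct and follows essentially the same route as the paper. The only cosmetic difference is packaging: the paper first writes the It\^o identity $e^{i\beta\kappa s}H(X_s)\,ds = d\bigl(e^{i\beta\kappa s}R_{\beta\kappa}H(X_s)\bigr) - e^{i\beta\kappa s}\nabla R_{\beta\kappa}H(X_s)\,dX_s$ and then integrates by parts against the residual factor $a(s)^m e^{i\beta\tilde\theta_s(\kappa)}$, whereas you apply It\^o in one stroke to the full product $\Phi_s=a(s)^m e^{i\beta\theta_s(\kappa)}(R_{\beta\kappa}H)(X_s)$; the cancellation of the $i\beta\kappa$ drift and the $\sin^2\theta$ expansion are identical in both versions.
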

\begin{remark}
It is 
not necessary to distinguish  
$T_{\beta, \kappa}^+$
from  
$T_{\beta, \kappa}^-$.
We put 
the plus minus symbol 
$\pm$ 
only to facilitate the computation of some combinatorial quantities in the proof of Proposition \ref{D}.

\end{remark}
\begin{proof}
By Ito's formula, 
\beq
e^{i \kappa s} H(X_s) ds
=
d \left(
e^{i \kappa s}
R_{\kappa} (H)
\right)
-
e^{i \kappa s} 
\nabla R_{\kappa} (H) d X_s
\eeq
which we substitute into 
$K^{(n)}_{m, \beta, \kappa, t}$ 
and integrate by parts. 
\beq
K_{m, \beta, \kappa, t}^{(n)}(\kappa)
&=&
\left[
a(s)^m e^{i \beta \theta_{s}(\kappa)}
R_{\beta \kappa} (H)(X_s) 
\right]_0^{nt}
\\
&& - 
\int_0^{nt}
(a(s)^m)' 
e^{i \beta \theta_{s}(\kappa)}
R_{\beta \kappa}(H)(X_s) ds
\\
&& - 
\frac {i \beta}{2 \kappa}
\int_0^{nt}
Re \left(
e^{2i \theta_{s}(\kappa)}-1 
\right)
e^{i \beta \theta_{s}(\kappa)}
a(s)^{m+1}
F(X_s) R_{\beta \kappa}(H)(X_s) ds
\\
&& -
\int_0^{nt} a(s)^m 
e^{i\beta \theta_{s}(\kappa)}
\nabla R_{\beta \kappa} (H)(X_s) d X_s
\\
&=:& 
K_1(\kappa) + \cdots + K_4 (\kappa).
\eeq
$K_1$, $K_2$
are bounded. 
$K_3$ 
gives the first three terms in the RHS of 
(\ref{decompositionK}).
$K_4$ 
is a martingale and it is easy to check the statement for those. 
\QED
\end{proof}
%
%%%%%
%
%
\begin{lemma}
\label{CorK0}
If 
$\beta = 0$, 
\beq
K_{m, 0, t}^{(n)}(H)
&=&
\langle H \rangle 
\int_0^{nt} a(s)^m ds
+
L_{m, 0, t}^{(n)}
+
M_{m, 0, t}^{(n)}
\eeq
where 
$L_{m, 0, t}^{(n)}$ 
is bounded, and 
$M_{m, 0, t}^{(n)}$
is a martingale such that 
\beq
L_{m, 0, t}^{(n)}
&=&
\left[
a(s)^m (RH)(X_s) 
\right]_0^{nt}
-
\int_0^{nt} ( a(s)^m )' (RH)(X_s) ds
\\
\langle M_{m, 0, t}^{(n)} \rangle
&=&
O \left(
\int_0^{nt} a(s)^{2m} ds
\right), 
\quad
n \to \infty
\eeq
where 
$(RH)(s) := L^{-1}(H - \langle H \rangle)(s)$.
\end{lemma}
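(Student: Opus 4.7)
The plan is to follow the same Ito's formula / integration-by-parts strategy as in Lemma \ref{CorKm}, adapted to the degenerate case $\beta=0$ where the oscillating phase $e^{i\beta\theta_s(\kappa)}$ disappears and the resolvent $(L+i\beta\kappa)^{-1}$ is no longer well-defined on all of $C^\infty(M)$ because $L$ has a kernel (the constants). The first step is therefore to split off the zero mode: write
\[
H = \langle H\rangle + (H - \langle H\rangle),
\]
so that $K^{(n)}_{m,0,t}(H) = \langle H\rangle \int_0^{nt} a(s)^m\,ds + K^{(n)}_{m,0,t}(H - \langle H\rangle)$. Because $H - \langle H\rangle$ has zero mean on $M$ and $L$ is the Brownian-motion generator, the pseudoinverse $RH := L^{-1}(H - \langle H\rangle)$ is well defined and smooth.

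Next I apply Ito's formula to $RH(X_s)$ to rewrite the integrand. Since $L(RH) = H - \langle H\rangle$,
\[
d\bigl(RH(X_s)\bigr) = (H - \langle H\rangle)(X_s)\,ds + \nabla RH(X_s)\,dX_s,
\]
which yields
\[
(H - \langle H\rangle)(X_s)\,ds = d\bigl(RH(X_s)\bigr) - \nabla RH(X_s)\,dX_s.
\]
Multiplying by $a(s)^m$ and integrating over $[0,nt]$, the $d(\cdot)$ term is integrated by parts to produce the boundary contribution $[a(s)^m RH(X_s)]_0^{nt}$ and the correction $-\int_0^{nt}(a(s)^m)' RH(X_s)\,ds$; together these form the bounded term $L^{(n)}_{m,0,t}$ (bounded because $RH$ is bounded on the compact torus $M$, $a$ is bounded, and $a^m$ is of bounded variation as $a$ is monotone for $s$ large). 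The stochastic integral against $dX_s$ supplies the martingale
\[
M^{(n)}_{m,0,t} = -\int_0^{nt} a(s)^m \nabla RH(X_s)\,dX_s.
\]

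Finally, the quadratic variation is computed directly from the Ito isometry:
\[
\langle M^{(n)}_{m,0,t}\rangle = \int_0^{nt} a(s)^{2m}\,|\nabla RH|^2(X_s)\,ds,
\]
and since $|\nabla RH|^2$ is bounded on the compact manifold $M$, this is $O\!\bigl(\int_0^{nt} a(s)^{2m}\,ds\bigr)$, as claimed. There is no serious obstacle here; the only subtlety, compared with Lemma \ref{CorKm}, is that one must first subtract $\langle H\rangle$ before inverting $L$, which is precisely why the leading deterministic term $\langle H\rangle\int_0^{nt} a(s)^m\,ds$ appears in the $\beta=0$ case but not in the $\beta\neq 0$ case. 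The remaining verifications (boundedness of $L^{(n)}_{m,0,t}$ and the order estimate for the quadratic variation) are routine and parallel to the corresponding arguments in the proof of Lemma \ref{CorKm}.
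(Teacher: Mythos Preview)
Your proof is correct and follows essentially the same approach as the paper: the paper also writes $H(X_s)\,ds=\langle H\rangle\,ds+d\bigl(RH(X_s)\bigr)-\nabla RH(X_s)\,dX_s$ via Ito's formula, multiplies by $a(s)^m$, and integrates by parts to obtain the same four terms $K_1',\ldots,K_4'$. Your version simply makes the zero-mode subtraction and the quadratic-variation bound more explicit, but the argument is otherwise identical.
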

\begin{proof}
By 
Ito's formula, we have 
\beq
H(X_s) ds
&=&
\langle H \rangle ds
+
d \left(
R(H)(X_s)
\right)
-
\nabla (R(H))(X_s) d X_s
\eeq
which gives 
\beq
K_{m, 0}^{(n)}
&=&
\langle H \rangle
\int_0^{nt} a(s)^m ds
+
\left[
a(s)^m G(X_s) 
\right]_0^{nt}
\\
&&-
\int_0^{nt} ( a(s)^m )' G(X_s) ds
- 
\int_0^{nt} a(s)^m \nabla G(X_s) d X_s
\\
&=:& K_1' + \cdots + K'_4. 
\eeq
$K'_2$, $K'_3$ 
are bounded and 
$K'_4$
is a martingale. 
\QED
\end{proof}
%
%%%%%%%%%%%%%%%%%%%%%%%%%%%%%%%%%%%%%%%%
\subsection{Expansion of $J$}
In this subsection 
we study the behavior of 
$J_t^{(n)}$, 
$J_{0,t}^{(n)}$
by using Lemmas \ref{CorKm}, \ref{CorK0}.
%
%%%%
%
\begin{proposition}
\label{D}
For any 
$D \ge 1$ 
we have 
\begin{equation}
J_t^{(n)}(\kappa)
=
\sum_{k=1}^D
C_k(\kappa)  
\int_0^{nt} a(s)^{k+1} ds
+
K_D(\kappa)
+
L_D(\kappa) 
+ 
M_D(\kappa)
\label{expansionJ}
\end{equation}
where 
\[
K_D(\kappa)
=
O \left(
\int_0^{nt} a(s)^{D+2} ds
\right)
\]
and 
$L_D$
is bounded.
The constants 
$C_k(\kappa)$
are given in (\ref{C}) below. 
$M_D (\kappa)$
is a martingale such that 
\beq
M_D (\kappa)
&=&
M_{1, 2, \kappa, t}
+
M'_D
\\
M_{1, 2, \kappa, t}
&:=&
-
\int_0^{nt}
a(s) e^{2i \theta_s(\kappa)}
\nabla g_{\kappa}(X_s) d X_s
\\
\langle M_D(\kappa), M_D(\kappa) \rangle
&=&
O \left(
\int_0^{nt} a(s)^4 ds
\right)
\\
\langle M_D (\kappa), \overline{M}_D(\kappa) \rangle
&=&
\langle 
M_{1,2,\kappa, t}, \overline{M}_{1,2,\kappa, t} 
\rangle
(1+o(1))
\\
&=&
\langle
[ g_{\kappa}, \overline{g}_{\kappa} ] 
\rangle
\int_0^{nt} 
a(s)^2 ds
(1 + o(1))
\eeq
where we set 
$g_{\kappa} := R_{2 \kappa} F$.
\end{proposition}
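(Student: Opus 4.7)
The plan is to iteratively apply Lemma~\ref{CorKm} to the identity $J_t^{(n)}(\kappa) = K^{(n)}_{1,2,\kappa,t}(F)$, peeling off one power of $a$ at each step. The first application produces three $K$-integrals at level $m=2$ with $\beta$-values $4,0,2$, together with a bounded remainder and the martingale $M^{(n)}_{1,2,\kappa,t}$. Lemma~\ref{CorK0} then resolves the $\beta=0$ branch into the constant term $\langle T^{-}_{2,\kappa}F\rangle\int_0^{nt}a(s)^2\,ds$; since $T^{-}_{2,\kappa}F = -\frac{i}{2\kappa}Fg_\kappa$ with $g_\kappa = R_{2\kappa}F$, this identifies $C_1(\kappa) = -\frac{i}{2\kappa}\langle Fg_\kappa\rangle$, in agreement with the constant appearing in Theorem~\ref{critical}.

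The recursion continues: at each step one applies Lemma~\ref{CorKm} to every surviving $K^{(n)}_{m,\beta,\kappa,t}(\cdot)$ with $\beta\neq 0$ and Lemma~\ref{CorK0} to every new $\beta=0$ branch. A $\beta=0$ branch reached at level $m=k+1$ contributes $c \cdot \int_0^{nt}a(s)^{k+1}\,ds$, where $c$ is the spatial average of the composition $T^{\sharp_1}_{\cdot,\kappa}\circ\cdots\circ T^{\sharp_{k-1}}_{\cdot,\kappa}\circ T^{-}_{\cdot,\kappa}F$ along the branching path; summing over all paths from $\beta=2$ to $\beta=0$ in exactly $k$ steps defines $C_k(\kappa)$, which is precisely the role of the $T^{+}/T^{-}$ labelling noted in the remark after Lemma~\ref{CorKm}. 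After $D$ rounds all $C_k$, $k\leq D$, have been extracted; one additional round on the remaining $\beta\neq 0$ terms pushes them to level $m = D+2$, where they are pathwise $O(\int_0^{nt}a(s)^{D+2}\,ds)$ and collected into $K_D(\kappa)$. The bounded pieces produced throughout combine into $L_D(\kappa)$, and all martingales combine into $M_D(\kappa) = M^{(n)}_{1,2,\kappa,t} + M'_D$, where $M'_D$ groups the martingales arising from iterations with $m\geq 2$.

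For the covariance structure, $\langle M'_D,\overline{M'_D}\rangle = O(\int_0^{nt}a(s)^4\,ds)$ follows from summing the $O(\int a^{2m})$ bound of Lemma~\ref{CorKm} over $m\geq 2$. The leading term satisfies $\langle M_{1,2,\kappa,t},\overline{M}_{1,2,\kappa,t}\rangle = \int_0^{nt}a(s)^2[g_\kappa,\overline{g}_\kappa](X_s)\,ds = K^{(n)}_{2,0,\kappa,t}([g_\kappa,\overline{g}_\kappa])$, and a single application of Lemma~\ref{CorK0} produces $\langle [g_\kappa,\overline{g}_\kappa]\rangle\int_0^{nt}a(s)^2\,ds$ modulo lower-order bounded and martingale corrections, yielding the stated $(1+o(1))$ asymptotics. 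The main obstacle lies in the bound $\langle M_D,M_D\rangle = O(\int a^4\,ds)$ without conjugate: the naive pointwise estimate only gives $O(\int a^2\,ds)$, so one must exploit the oscillation of $e^{4i\theta_s}$ in $\langle M_{1,2,\kappa,t},M_{1,2,\kappa,t}\rangle = K^{(n)}_{2,4,\kappa,t}((\nabla g_\kappa)^2)$ by applying Lemma~\ref{CorKm} two more times to push the non-oscillating remainder down to level $m=4$. This extra reduction, together with the combinatorial bookkeeping of the branching tree used to assemble the $C_k(\kappa)$, is the most delicate part of the argument.
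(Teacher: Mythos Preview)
Your proposal is correct and follows essentially the same route as the paper: iterate Lemma~\ref{CorKm} on $J_t^{(n)}(\kappa)=K^{(n)}_{1,2,\kappa,t}(F)$, peel off the $\beta=0$ branches via Lemma~\ref{CorK0} to harvest the constants $C_k(\kappa)$, collect the residual $K_{D+2,\cdot}$ terms into $K_D$, and handle the non-conjugate bracket $\langle M_{1,2,\kappa,t},M_{1,2,\kappa,t}\rangle = K^{(n)}_{2,4,\kappa,t}([g_\kappa,g_\kappa])$ by two further applications of Lemma~\ref{CorKm}. One small slip: in your description of $C_k$ the final $T^{-}$ should sit on the \emph{left} of the composition (it is the last operator applied, bringing $\beta$ from $2$ down to $0$), matching the paper's formula $\langle T^{-1}_{2,\kappa} T^{\epsilon_{k-1}}_{\beta_{k-1},\kappa}\cdots T^{\epsilon_1}_{\beta_1,\kappa} F\rangle$.
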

\begin{proof}
\mbox{}\\
(1)
1st step : 
Letting 
$m=1$, $\beta = 2$ 
in Lemma \ref{CorKm}, 
\begin{eqnarray}
J_{t}^{(n)}(\kappa)
&=&
K_{1, 2, \kappa, t}^{(n)}(F)
\nonumber
\\
&=&
K_{2, 4, \kappa, t}^{(n)}
\left(
T_{2, \kappa}^+ (F)
\right) 
+ 
K_{2, 0, \kappa, t}^{(n)}
\left(
T_{2, \kappa}^- (F)
\right) + 
K_{2, 2, \kappa, t}
\left(
T_{2, \kappa}^0(F)
\right)
\nonumber
\\
&& \quad
+ L_{1, 2, \kappa, t}^{(n)} 
+ M_{1, 2, \kappa, t}^{(n)}
\label{firststep}
\end{eqnarray}
We further use 
Lemma \ref{CorKm} to the 1st and 3rd terms in the RHS of 
(\ref{firststep}), 
so that they are 
$O\left(
\int_0^{nt} a(s)^3 ds
\right)$.
For the 2nd term 
$K_{2, 0, \kappa, t}^{(n)}$, 
we use Lemma 
\ref{CorK0}.
\beq
&&
K_{2, 0, \kappa, t}^{(n)}
\left(
T_{2, \kappa}^- (F)
\right)
=
\langle 
T_{2, \kappa}^- (F)
\rangle
\int_0^{nt} a(s)^2 ds
+
L_{2, 0, t}^{(n)}
+
M_{2, 0, t}^{(n)}, 
\\
&&
\langle 
M_{2, 0, t}^{(n)}, 
M_{2, 0, t}^{(n)}
\rangle, 
\;
\langle 
M_{2, 0, t}^{(n)}, 
\overline{ M_{2, 0, t}^{(n)} }
\rangle
=
O
\left(
\int_0^{nt} a(s)^4 ds 
\right).
\eeq
For the 5th martingale term 
$M_{1, 2, \kappa, t}^{(n)}$ 
in the RHS of (\ref{firststep}), 
we estimate its quadratic variation by Lemmas 
\ref{CorKm}, \ref{CorK0}. 
\beq
M_{1, 2, \kappa, t}^{(n)}
&=&
-
\int_0^{nt}
a(s) e^{2i \theta_s(\kappa)} 
(\nabla g_{\kappa})(X_s) dX_s
\\
\langle M_{1, 2, \kappa, t}^{(n)}, 
M_{1, 2, \kappa, t}^{(n)}
\rangle
&=&
K_{2, 4, \kappa, t} (\varphi_{\kappa}), 
\quad
\varphi_{\kappa}
:=
[ g_{\kappa}, g_{\kappa} ] 
\\
&=&
K_{3, 6, \kappa, t}(T^+_{\beta, \kappa}(\varphi_{\kappa}))
+
K_{3, 2, \kappa, t}(T^-_{\beta, \kappa}(\varphi_{\kappa}))
+
K_{3, 4, \kappa, t}(T_{\beta, \kappa}(\varphi_{\kappa}))
\\
&& + 
%(\mbox{bounded})
L_{2, 4, \kappa, t}
+
M_{2, 4, \kappa, t}.
\eeq
For the 
first three terms of RHS we use Lemma 
\ref{CorKm} 
again. 
Moreover 
we have 
$\langle M_{2, 4, \kappa, t}, M_{2, 4, \kappa, t} \rangle
=
O \left(
\int_0^{nt} a(s)^4 ds 
\right)
=
O(n^{1- 4\alpha})$.
Since 
$1 - 3 \alpha > \frac {1 - 4 \alpha}{2}$
if and only if 
$\alpha < \frac 12$, 
it is lower order or bounded. 
Therefore
\beq
\langle M_{1, 2, \kappa, t}^{(n)}, 
M_{1, 2, \kappa, t}^{(n)}
\rangle
&=&
O \left(
\int_0^{nt} a(s)^4 ds
\right)
\\
\langle M_{1, 2, \kappa, t}^{(n)}, 
\overline{M}_{1, 2, \kappa, t}^{(n)}
\rangle
&=&
\int_0^{nt}
a(s)^2
[ g_{\kappa}, \overline{g}_{\kappa} ] (X_s) ds
\\
&=&
\langle 
[ g_{\kappa}, \overline{g}_{\kappa} ]
\rangle
\int_0^{nt} a(s)^2 ds
+
\mbox{ (bounded) }
+
M'_{1,2,\kappa, t}
\\
\langle M'_{1,2,\kappa, t}, M'_{1,2,\kappa, t} \rangle
&=&
O \left(
\int_0^{nt} a(s)^4 ds
\right)
\eeq
and 
(\ref{expansionJ})
is proved for 
$D=1$.\\
\noindent
(2)
$(k+1)$-th step : 
we iterate this process.
After the $k$-th step, we have a sum of 
$\sum_{j=1}^k 
C_j(\kappa)
\int_0^{nt} a(s)^{j+1} ds$, 
$K_{k+1, \beta, \kappa, t}(H)$
$(\beta \ne 0)$, 
bounded term, and a sum of martingales. 
So in the $(k+1)$-th step, 
we apply Lemma \ref{CorKm} to 
$K_{k+1, \beta, \kappa, t}(H)
\;(\beta \ne 0)$ 
to have 
$K_{k+2, \;\beta', \;\kappa,\; t}(H)
(\beta' \ne 0)$,
$K_{k+2, \,0, \;\kappa,\; t}(H)$, 
bounded term and a sum of martingales.
We further apply 
Lemma \ref{CorK0} to 
$K_{k+2, 0, \kappa, t}(H)$ 
to have a deterministic term proportional to 
$\int_0^{nt} a(s)^{k+2} ds$ 
and a sum of bounded terms and martingales.
We note that 
the martingales which emerge in each steps have quadratic variation with order at most 
$O\left(
\int_0^{nt} a(s)^4 ds
\right)$, 
except 
$M_{1,2, \kappa, t}$.
Letting 
$M_K$ 
be the sum of all martingales appeared up to the 
$D$-th step, 
$M_K$ 
satisfies the statement in Proposition \ref{D}. 
To compute 
the coefficients proportional to 
$\int_0^{nt} a(s)^{k+1} ds$, 
we consider a set of indices :
\beq
S_k :=
\Biggl\{
\left(
(\epsilon_1, \cdots, \epsilon_{k-1}), 
(\beta_1, \cdots, \beta_{k-1})
\right)
\, \Biggl| \, 
&&
\epsilon_i = 0, \pm 1, 
\\
&&
\beta_i \in 2 {\bf N}, 
\beta_{i+1} = \beta_i + 2 \epsilon_i, 
\beta_1 = 2
\\
&&
\sum_{i=1}^j \epsilon_i \ge 0, 
1 \le j \le k-2, 
\sum_{i=1}^{k-1} \epsilon_i = 0
\Biggr\}
\eeq
then the desired coefficients is given by 
\begin{equation}
C_k (\kappa) = 
\left\{
\begin{array}{cc}
\langle T_{2, \kappa}^{-1} F \rangle & (k=1) \\
\sum_{
\left(
(\epsilon_i), (\beta_i)
\right) 
\in S_k
}
\langle
T_{2, \kappa}^{-1} 
T_{\beta_{k-1}, \kappa}^{\epsilon_{k-1}}
\cdots
T_{\beta_1, \kappa}^{\epsilon_1} F
\rangle
&
(k \ge 2)
\end{array}
\right.
\label{C}
\end{equation}
For instance, 
omitting the 
$\kappa$
-dependence, we have 
\beq
C_2 
&=&
\langle 
T_2^{-1} T_2^0 F
\rangle
\\
C_3 
&=&
\langle 
T_2^- T_4^- T_2^+ F
+
T_2^- T_2^0 T_2^0 F
\rangle.
\eeq
Proof of Proposition \ref{D} is now complete. 
\QED
\end{proof}
Lemma \ref{CorK0} 
yields the following decomposition of 
$J^{(n)}_{0, t}$. 
\begin{proposition}
\label{D0}
\begin{eqnarray}
J_{0, t}^{(n)}
&=&
L_{0, nt} + M_{0, nt}
\label{expansionJ0}
\\
L_{0, nt}
&=&
- a(0) g(X_0) - 
\int_0^{nt} a'(s) g(X_s) ds
\nonumber
\\
M_{0, nt}
&=&
-\int_0^{nt}
a(s) \nabla g(X_s) d X_s
\nonumber
\end{eqnarray}
where 
$L_{0, nt}$
is bounded, 
$M_{0, nt}$
is a martingale such that  
\beq
\langle M_0, M_0 \rangle
&=&
\langle [g, g] \rangle
\int_0^{nt} a(s)^2 ds (1+o(1)).
\eeq
\end{proposition}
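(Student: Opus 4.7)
The plan is to obtain Proposition \ref{D0} as a direct consequence of Lemma \ref{CorK0}, since $J_{0,t}^{(n)}$ is exactly $K_{1,0,t}^{(n)}(F)$. Indeed, setting $m=1$, $\beta=0$, $H=F$ in the definition of $K$ gives $\int_0^{nt} a(s) F(X_s)\, ds$, which is $J_{0,t}^{(n)}$. Since $\langle F \rangle = 0$ by assumption, $g := L^{-1}(F - \langle F \rangle) = L^{-1} F$ is a well-defined element of $C^{\infty}(M)$, and Lemma \ref{CorK0} kills the deterministic $\langle F \rangle \int_0^{nt} a(s)\, ds$ term outright. The output is therefore $L_{1,0,t}^{(n)} + M_{1,0,t}^{(n)}$, which one immediately identifies with the proposed $L_{0,nt}$ and $M_{0,nt}$ (the boundary term $a(nt) g(X_{nt})$ produced by the integration by parts is bounded, as $a$ and $g$ are bounded, and can be absorbed into $L_{0,nt}$).

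To make the step transparent, I would also record the direct derivation via Itô's formula applied to $a(s) g(X_s)$. Using $L g = F$, one obtains
\beq
d\bigl(a(s) g(X_s)\bigr)
= a'(s) g(X_s)\, ds + a(s) F(X_s)\, ds + a(s) \nabla g(X_s)\, dX_s,
\eeq
and integration from $0$ to $nt$ yields the decomposition in (\ref{expansionJ0}) after rearrangement. This makes the formulas for $L_{0,nt}$ and $M_{0,nt}$ explicit without appealing to Lemma \ref{CorK0}.

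Boundedness of $L_{0,nt}$ is then clear: $g$ is smooth on the compact torus $M$ and hence bounded, $a(0)$ is a constant, and $\int_0^{\infty} |a'(s)|\, ds < \infty$ because $a$ is eventually monotone and $a(t) = t^{-\alpha}(1+o(1))$ as $t \to \infty$, so the total variation of $a$ is finite. This bounds $|L_{0,nt}|$ uniformly in $n$ and $t$.

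The quadratic variation computation is the only step that requires a second iteration. From the Itô integral expression for $M_{0,nt}$, its bracket is
\beq
\langle M_0, M_0 \rangle
= \int_0^{nt} a(s)^2 [g, g](X_s)\, ds
= K_{2, 0, t}^{(n)}\bigl([g,g]\bigr).
\eeq
Applying Lemma \ref{CorK0} once more with $m=2$, $H = [g,g]$ produces $\langle [g, g] \rangle \int_0^{nt} a(s)^2\, ds$ plus a bounded term and a martingale of bracket order $O\bigl(\int_0^{nt} a(s)^4\, ds\bigr)$. Since $\int_0^{nt} a(s)^2\, ds$ dominates $\sqrt{\int_0^{nt} a(s)^4\, ds}$ in the regime of interest (it grows like $(nt)^{1-2\alpha}$ vs.\ $(nt)^{(1-4\alpha)/2}$ when $\alpha < 1/2$, and logarithmically vs.\ bounded when $\alpha = 1/2$), the remainders are $o\bigl(\int_0^{nt} a(s)^2\, ds\bigr)$, giving the claimed $(1 + o(1))$ asymptotics. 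I do not expect any serious obstacle; if anything, the only mild subtlety is verifying the dominance of the deterministic term over the fluctuations in the bracket, which is exactly the same calculation already carried out in the proof of Proposition \ref{D} for $M_{1,2,\kappa,t}^{(n)}$.
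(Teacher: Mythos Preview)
Your proposal is correct and follows essentially the same approach as the paper, which simply states that Proposition~\ref{D0} is a direct consequence of Lemma~\ref{CorK0}. You have in fact supplied considerably more detail than the paper does---the explicit It\^o computation, the verification that $L_{0,nt}$ is bounded via the finite total variation of $a$, and the second application of Lemma~\ref{CorK0} to extract the leading order of $\langle M_0, M_0\rangle$---all of which are implicit in the paper's one-line justification.
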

%
%
%%%%%%%%%%%%%%%%%%%%%%%%%
\section{Proof 
for supercritical case
}
First of all, 
we notice that the argument 
of the proof of Proposition 7.1 in \cite{KN1} shows that the distribution of 
$\tilde{\theta}_{\infty}(\kappa)$ 
is continuous for 
$\alpha > \frac 12$
(also for DC model with 
$\alpha \ge \frac 12$). 
In fact, 
we can show that 
$\lim_{m \to \infty}
\lim_{t \to \infty}
{\bf E}[ e^{im \tilde{\theta}_t (\kappa)} ] = 0$.
Thus 
$\left\{
\gamma_j + \tilde{\theta}_{\infty}(\kappa_j) 
\right\}_{\pi} \ne 0$, 
a.s. so that 
\beq
\theta_{n_k}(\kappa_j)
&=&
\lfloor
\kappa_{j} n_k 
\rfloor_{\pi}
\pi
+
\lfloor 
\gamma_j 
+
\tilde{\theta}_{\infty}(\kappa_j)
\rfloor_{\pi}
\pi
+
\left\{
\gamma_j + \tilde{\theta}_{\infty}(\kappa_j) 
\right\}_{\pi}
+
o(1), 
\quad
j=1,2,
\quad
a.s.
\eeq
Theorem \ref{supercritical} 
now follows from Sturm's oscillation theorem.
\QED
%

%%%%%%%%%%%%%%%%%%%%%%%%%%%%%%%%%%%%%%%
\section{
Proof for critical case
}
Using 
Proposition \ref{D} 
with 
$D=1$ 
and substituting it into 
(\ref{decomposition})
yields 
\beq
&&
\tilde{\theta}_{n}(\kappa_2)
-
\tilde{\theta}_{n}(\kappa_1)
\\
&=&
\left(
\frac {Re C_1 (\kappa_2)}{2 \kappa_2}
-
\frac {Re C_1 (\kappa_1)}{2 \kappa_1}
\right)
\int_0^{n} a(s)^{2} ds
\\
&& + 
\frac {1}{2 \kappa_2} Re (K_1(\kappa_2)+L_1(\kappa_2))
-
\frac {1}{2 \kappa_1} Re (K_1(\kappa_1)+L_1(\kappa_1)))
-
\left(
\frac {1}{2 \kappa_2}
-
\frac {1}{2 \kappa_1}
\right)
L_0
\\
&& +
\frac {1}{2 \kappa_2} Re M_1(\kappa_2)
-
\frac {1}{2 \kappa_1} Re M_1 (\kappa_1)
-
\left(
\frac {1}{2 \kappa_2}
-
\frac {1}{2 \kappa_1}
\right)
M_0.
\eeq
By 
Lemmas \ref{CorKm}, \ref{CorK0}, 
\beq
K_1 (\kappa)
&=&
O \left(
\int_0^n a(s)^4 ds
\right)
\\
\langle
Re M_1(\kappa), Re M_1 (\kappa')
\rangle
&=&
\frac 12
\langle 
[ g_{\kappa}, 
\overline{g}_{\kappa} ]
\rangle
\log n
(\delta_{\kappa, \kappa'}
+ o(1))
\\
\langle 
Re M_1, M_0 
\rangle
&=&
o(\log n)
\\
\langle 
M_0, M_0 
\rangle
&=&
\langle 
[ g,g ]
\rangle
\log n
(1 + o(1)).
\eeq
Let 
\beq
M(\kappa_1, \kappa_2) 
&:=&
\frac {1}{2 \kappa_2} Re M_1(\kappa_2)
-
\frac {1}{2 \kappa_1} Re M_1 (\kappa_1)
-
\left(
\frac {1}{2 \kappa_2}
-
\frac {1}{2 \kappa_1}
\right)
M_0
\eeq
be its martingale part.
Let 
$G (\kappa)$, 
$G$
be independent Gaussians satisfying the covariance condition stated in Theorem \ref{critical}. 
Then 
by the martingale central limit theorem, 
\beq
\frac {M(\kappa_1, \kappa_2)}
{\sqrt{\log n}}
\stackrel{d}{\to}
\frac {1}{2 \kappa_2} G(\kappa_2)
-
\frac {1}{2 \kappa_1} G(\kappa_1)
-
\left(
\frac {1}{2 \kappa_2}
-
\frac {1}{2 \kappa_1}
\right)
G
\eeq
which leads us to the completion of proof : 
\beq
&&
\Biggl\{
\tilde{\theta}_{n}(\kappa_2)
-
\tilde{\theta}_{n}(\kappa_1)
-
Re
\left(
\frac {C_1(\kappa_2)}{2 \kappa_2}
-
\frac {C_1(\kappa_1)}{2 \kappa_1}
\right)
\int_0^{n} a(s)^{2} ds
\Biggr\}
\frac{1}{\sqrt{\log n}}
\\
&& \qquad 
=
\Biggl\{
\frac {1}{2 \kappa_2} 
Re \left(
K_1(\kappa_2) + L_1(\kappa_2)
\right)
-
\frac {1}{2 \kappa_1} 
Re \left(
K_1(\kappa_1) + L_1 (\kappa_1)
\right)
\\
&& \qquad\qquad
-
\left(
\frac {1}{2 \kappa_2}
-
\frac {1}{2 \kappa_1}
\right)
L_0
\Biggr\}
\frac{1}{\sqrt{\log n}}
+ 
\frac {M(\kappa_1, \kappa_2)}
{\sqrt{\log n}}
\\
&&\qquad
\stackrel{d}{\to}
\frac {1}{2 \kappa_2} G(\kappa_2)
-
\frac {1}{2 \kappa_1} G(\kappa_1)
-
\left(
\frac {1}{2 \kappa_2}
-
\frac {1}{2 \kappa_1}
\right)
G.
\eeq
\QED
%

%%%%%%%%%%%%%%%%%%%%%%%%%%%%%
\section{
Proof for subcritical case
}
Let 
$D :=
\min \{
d \in {\bf N} \, | \, 
\frac {1}{2 \alpha} < d+ 1 \}$.
Substituting 
(\ref{expansionJ}), (\ref{expansionJ0}) 
into
(\ref{decomposition}), 
yields
\beq
&&
\tilde{\theta}_{nt}(\kappa_2)
-
\tilde{\theta}_{nt}(\kappa_1)
\\
&=&
\sum_{j=1}^D
\left(
\frac {Re C_j(\kappa_2)}{2 \kappa_2}
-
\frac {Re C_j(\kappa_1)}{2 \kappa_1}
\right)
\int_0^{nt} a(s)^{j+1} ds
\\
&& + 
\frac {1}{2 \kappa_2} Re (K_D(\kappa_2)+L_D(\kappa_2))
-
\frac {1}{2 \kappa_1} Re (K_D(\kappa_1)+L_D(\kappa_1))
-
\left(
\frac {1}{2 \kappa_2}
-
\frac {1}{2 \kappa_1}
\right)
L_0
\\
&& +
\frac {1}{2 \kappa_2} Re M_D(\kappa_2)
-
\frac {1}{2 \kappa_1} Re M_D (\kappa_1)
-
\left(
\frac {1}{2 \kappa_2}
-
\frac {1}{2 \kappa_1}
\right)
M_0.
\eeq
Let 
\beq
M(\kappa_1, \kappa_2) 
&:=&
\frac {1}{2 \kappa_2} Re M_D(\kappa_2)
-
\frac {1}{2 \kappa_1} Re M_D (\kappa_1)
-
\left(
\frac {1}{2 \kappa_2}
-
\frac {1}{2 \kappa_1}
\right)
M_0
\eeq
be the martingale part. 
We estimate 
the quadratic variations of 
$M_D(\kappa_2)$, 
$M_D(\kappa_1)$, 
$M_0$ 
by using 
Lemmas \ref{CorKm}, \ref{CorK0}. 
\beq
\langle
Re M_D(\kappa), Re M_D (\kappa')
\rangle
&=&
\frac 12
\langle 
[ g_{\kappa}, 
\overline{g}_{\kappa} ]
\rangle
\frac {n^{1 - 2 \alpha}}{1 - 2 \alpha}
t^{1 - 2 \alpha}
(\delta_{\kappa, \kappa'} + o(1))
\\
\langle 
Re M_D, M_0 
\rangle
&=&
o(n^{1 - 2 \alpha})
\\
\langle 
M_0, M_0 
\rangle
&=&
\langle 
[ g,g ]
\rangle
\frac {n^{1 - 2 \alpha}}{1 - 2 \alpha}
t^{1 - 2 \alpha}
(1 + o(1)).
\eeq
Thus  
letting 
$\{ G_t(\kappa) \}$, $\{ G_t \}$ 
be the Gaussians defined in the statement of Theorem \ref{subcritical}, 
we have 
\beq
\frac {M(\kappa_1, \kappa_2)}
{n^{\frac 12 - \alpha}}
\stackrel{d}{\to}
\frac {1}{2 \kappa_2} G_t(\kappa_2)
-
\frac {1}{2 \kappa_1} G_t(\kappa_1)
-
\left(
\frac {1}{2 \kappa_2}
-
\frac {1}{2 \kappa_1}
\right)
G_{t}.
\eeq
On the other hand, since 
\beq
K_D
&=&
O \left(
\int_0^{nt} a(s)^{D+2} ds
\right)
=
O \left(
n^{1 - (D+2) \alpha}
\right)
\eeq
$K_D$ 
is lower order compared to the martingale terms.
Therefore
\beq
&&
\Biggl\{
\tilde{\theta}_{nt}(\kappa_2)
-
\tilde{\theta}_{nt}(\kappa_1)
-
\sum_{j=1}^D
Re
\left(
\frac {C_j(\kappa_2)}{2 \kappa_2}
-
\frac {C_j(\kappa_1)}{2 \kappa_1}
\right)
\int_0^{nt} a(s)^{j+1} ds
\Biggr\}
\frac{1}{n^{\frac 12 - \alpha}}
\\
&&\qquad 
=
\Biggl\{
\frac {1}{2 \kappa_2} 
Re 
\left(
K_D(\kappa_2) + L_D(\kappa_2)
\right)
-
\frac {1}{2 \kappa_1} Re 
\left(
K_D(\kappa_1) + L_D(\kappa_1)
\right)
\\
&& \qquad\qquad
-
\left(
\frac {1}{2 \kappa_2} - 
\frac {1}{2 \kappa_1}
\right)
L_0
\Biggr\}
\frac {1}{n^{\frac 12 - \alpha}}
+ 
\frac {M(\kappa_1, \kappa_2)}
{n^{\frac 12 - \alpha}}
\\
&&\qquad
\stackrel{d}{\to}
\frac {1}{2 \kappa_2} G_t(\kappa_2)
-
\frac {1}{2 \kappa_1} G_t(\kappa_1)
-
\left(
\frac {1}{2 \kappa_2}
-
\frac {1}{2 \kappa_1}
\right)
G_{0 ,t}
\eeq
completing the proof of Theorem \ref{subcritical}.
\QED
%

%%%%%%%%%%%%%%%%%%%%%%%%%%%%%%%%%%%%

\vspace*{1em}
\noindent {\bf Acknowledgement }
The author 
would like to thank the Isaac Newton Institute for
Mathematical Sciences for its hospitality during the programme
``Periodic and Ergodic Spectral Problems"
supported by EPSRC Grant Number EP/K032208/1,  
and also to the referee for many useful comments. 
This work is partially supported by 
JSPS KAKENHI Grant Number 26400145.

%%%%% .REFERENCES %%%%%%%%%%%%%%%%%%%%%
%
\small

\end{document}